\lstdefinestyle{default}{
	language=Java,
	basicstyle=\footnotesize\ttfamily,
	keywordstyle=\color{teal}\ttfamily,
	commentstyle=\color{gray}\ttfamily,
    tabsize=2,
	numberbychapter=false,
	morekeywords={NULL,pointer_t,data_t,CAS,atomic,random_value,global,local,malloc,free,assume,udef,havoc,STOP,struct,shared,uint},
    moredelim=**[is][\bfseries]{@}{@},
    keepspaces=true,
}
\lstdefinestyle{condensed}{
	style=default,
	basicstyle=\scriptsize\ttfamily,
}
\lstdefinestyle{inline}{
	style=default,
	basicstyle=\ttfamily,
	keywords={},
}
\newcommand{\eg}{e.\kern+.75ptg.\xspace}
\newcommand{\ie}{i.\kern+.75pte.\xspace}
\crefname{assumption}{Assumption}{Assumptions}
\Crefname{assumption}{Assumption}{Assumptions}
\crefname{fact}{Fact}{Facts}
\Crefname{fact}{Fact}{Facts}
\crefname{figure}{Figure}{Figures}
\Crefname{figure}{Figure}{Figures}
\spnewtheorem{assumption}{Assumption}{\bfseries}{\itshape}
\spnewtheorem{fact}{Fact}{\bfseries}{\itshape}
\title{%
	Effect Summaries for Thread-Modular Analysis
}
\author{%
	Luk\'a\v s Hol\'ik\hspace{.5pt} \inst{1}
	\and
	Roland Meyer\hspace{1pt} \inst{2}
	\and
	Tom\'a\v s Vojnar\hspace{.85pt} \inst{1}
	\and
	Sebastian Wolff\hspace{1.75pt} \inst{2}\fnmsep\inst{3}
}
\institute{%
	\textsuperscript{1} Brno University of Technology
	\quad
	\textsuperscript{2} TU Braunschweig
	\quad
	\textsuperscript{3} Fraunhofer ITWM
}
\begin{document}
	\maketitle

\begin{abstract} We propose a novel guess-and-check principle to increase the
efficiency of thread-modular verification of lock-free data structures. We build
on a heuristic that guesses candidates for stateless effect summaries of
programs by searching the code for instances of a~copy-and-check programming
idiom common in lock-free data structures. These candidate summaries are used to
compute the interference among threads in linear time. Since a candidate summary
need not be a sound effect summary, we show how to fully automatically check
whether the precision of candidate summaries is sufficient. We can thus perform
sound verification despite relying on an unsound heuristic. We have implemented
our approach and found it up to two orders of magnitude faster than existing
ones. \end{abstract}


\section{Introduction}
\label{sec:introduction}

Verification of concurrent, lock-free data structures has recently received
considerable attention
\cite{DBLP:conf/tacas/AbdullaHHJR13,DBLP:conf/sas/AbdullaJT16,DBLP:conf/vmcai/HazizaHMW16,DBLP:conf/ecoop/PintoDG14,DBLP:journals/tocl/SchellhornDW14}.
Such structures are both of high practical relevance 
and, at the same time,
difficult to write. A common correctness notion in this context is
\emph{linearizability}~\cite{DBLP:journals/toplas/HerlihyW90}, which requires
that every concurrent execution can be linearized to an execution that could
also occur sequentially. For many data structures, linearizability reduces to
checking control-flow reachability in a variant of the data structure that is
augmented with observer automata~\cite{DBLP:conf/tacas/AbdullaHHJR13}.  This
control-flow reachability problem, in turn, is often solved by means of
\emph{thread-modular analysis}
\cite{DBLP:conf/cav/BerdineLMRS08,DBLP:journals/toplas/Jones83}.  Our
contribution is on improving thread-modular analyses for verifying
linearizability of lock-free data structures.


Thread-modular analyses compute the least solution to a recursive
equation\vspace*{-1mm}
\[
   X ~=~ X \,\cup\, \mathit{seq}(X) \,\cup\, \mathit{interfere}(X) 
   \,.\vspace*{-1mm}
\]
The domain of $X$ are sets of \emph{views}, partial configurations reflecting
the perception of a~single thread about the shared heap. Crucially,
thread-modular analyses abstract away from the correlation among the views of
different threads. Function $\mathit{seq}(X)$ computes a sequential step,
the views obtained from $X$ by letting each thread execute a~command on its own
views. This function, however, does not reflect the fact that a thread may
change a part of the shared heap seen by others. Such interference steps are
computed by $\mathit{interfere}(X)$. It is this function that we improve on.
Before turning to the contribution, we recall the existing approaches and
motivate the need for more work.

%
%

In the \emph{merge-and-project} approach to interference (e.g.,
\cite{DBLP:conf/cav/BerdineLMRS08,DBLP:conf/spin/FlanaganQ03,DBLP:journals/toplas/Jones83,DBLP:conf/ictac/MalkisPR06}),
a merge operation is applied on every two views in $X$ to determine all merged
views consistent with the given ones. On each of the consistent views, some
thread performs a sequential step, and the result is projected to what is seen
by the other threads. The approach has problems with efficiency. The number of merge
operations is exactly the square of the number of views in the fixed point. In
addition, every merge of two views is expensive. It has to consider all
consistent views whose number can be exponential in~the~size~of~the~views.

%
%

The \emph{learning} approach to interference
\cite{DBLP:conf/vmcai/Vafeiadis10,DBLP:conf/esop/Mine11} derives, via symbolic
execution, a~symbolic update pattern for the shared heap. The learning process
is integrated into the fixed-point computation, which incurs an overhead.
Moreover, the number of update patterns to be learned is bounded by the number
of reachable views only. An interference step applies the learned update
patterns to all views, which again is quadratic in the number of views.
Moreover, although update patterns abstract away from thread-local information,
computing each application still requires a potentially expensive matching.
There are, however, fragments of separation logic with efficient
entailment~\cite{DBLP:conf/concur/CookHOPW11}.


What is missing is an \emph{efficient} approach to computing interferences among
threads.  

\smallskip

\noindent\textbf{Main ideas of the contribution.} We propose to
compute $\mathit{interfere}(X)$ by means of so-called \emph{effect summaries}.
An effect summary for a method $\Method$ is a \emph{stateless} program
$\SummaryMethod$ which over-approximates the effects that $\Method$ has on the
shared heap. With such summaries at hand, the interference step can be computed
in linear time by executing the method summaries $\SummaryMethod$ for
all methods $M$ on the views in the current set $X$. This is a substantial
improvement in efficiency over merge-and-project and learning techniques, which
require time roughly quadratic in the size of the fixed-point approximant, $X$,
and possibly exponential in the size of views.

%
%

Technically, \emph{statelessness} is defined as atomicity and absence of
persistent~local~state. We found both requirements typically satisfied by
methods of lock-free data structures. For our approach, this means stateless
summaries are likely to exist (which is confirmed by our experiments). The
reason why the atomicity requirement holds is that the methods have to preserve
the integrity of the data structure under interleavings. The absence of
persistent state holds since interference by
other threads may invalidate local~state~at~any~time.

%
%

We propose a heuristic to compute, from a method $\Method$, a stateless
program~$\SummaryMethod$ which is a candidate for being an effect summary of
$M$. Whether or not this candidate is indeed a summary of $\Method$ is checked
on top of the actual analysis, as discussed below. Our heuristic is based on
looking for occurrences of a programming idiom common in lock-free data
structures which we call \emph{copy-and-check blocks}. Such a block is a piece
of code that, despite lock-free execution, appears to be executed atomically.
Roughly, we identify each such block and turn it into an atomic program.

%
%

Programmers achieve the above mentioned atomicity of copy-and-check blocks
by first creating a local copy of a shared variable, performing some
computation over it, checking whether the copy is still up-to-date and, if so,
publishing the results of the computation to the shared heap. A classic
implementation of such blocks is based on \emph{compare-and-swap} (CAS)
instructions. In this case, for a local variable \code{t} and a shared variable
\code{T}, the copy-and-check block typically starts with an assignment
\code{t=T} and finishes with executing \code{CAS(T,t,x)} which atomically checks
whether \code{t==T} holds and, if so, changes the value of \code{T} to \code{x}.
Hereafter, we will denote such blocks as \emph{CAS blocks}, and we will
concentrate on them since they are rather common in practice. However, we note
that the same principle can be used to handle other kinds of copy-and-check
blocks, \eg, those based on the \emph{load-link/store-conditional} (LL/SC)
mechanism.

%
%

The idea of program analysis exploiting the intended atomicity of CAS blocks by
treating these blocks as atomic is quite natural. The reason why it is not
common practice is that this approach is not sound in general. The atomicity may
be introduced too coarsely, and, as a result, an $\mathit{interfere}(X)$
implementation based on the guessed candidate summaries may miss interleavings
present in the actual program. For our analysis, this means that its soundness
is conditional upon the fact that the candidate summaries used are indeed proper
effect summaries. It must be checked that they are stateless and that they cover
all effects on the shared heap. We propose a fully automatic and efficient way
of performing those checks. To the best of our knowledge, we are the first to
propose such checks.

%
%

To check whether candidate summaries indeed cover the effects of the methods for
which they were constructed, the idea is to let the methods execute under any
number of interferences with the candidate summaries and see if some effect not
covered by the candidate summaries can be obtained. Formally, we use the program
$\Summary=\bigoplus_i\SummaryMethod[_i]$, which executes a non-deterministically
chosen candidate summary $\SummaryMethod[_i]$ of a method $\Method_i$, execute
the Kleene iteration $\SummaryStar$ in parallel with each method $M$, and check
whether the following inclusion holds:\vspace*{-1mm}
\[ \mathit{Effects}(\Method\parallel \Summary^*) ~\subseteq~ 
   \mathit{Effects}(\Summary^*).\vspace*{-1mm}
\]
If this inclusion
holds, $\SummaryStar$ covers the actual interference all methods may cause.
Hence, our novel implementation of $\mathit{interfere}(X)$ explores all possible
interleavings. The cost of the inclusion test is asymptotically covered by that
of computing the fixed point, and practically negligible. It can be checked in
linear time (in the size of the fixed point) by performing, for every view in
$X$, a sequential step and testing whether the effect of the step can be
mimicked by the candidate summaries. It is worth pointing out the cyclic nature
of our reasoning: we use the candidate summaries to prove their own correctness.

%
%

Statelessness is an important aspect in the above process. It guarantees that
the sequential iteration of $\SummaryStar$ explores the overall interference the
methods of the data structure cause. As we are interested in parametric
verification, the overall interference is, in fact, the one produced by an
unbounded number of concurrent method invocations. Hence, computing this
interference using candidate summaries requires us to analyse the program
$\prod^\infty \Summary$, which is a parallel composition of arbitrarily many
$\Summary$ instances. However, statelessness guarantees that each of these
instances executes atomically without retaining any local state. While, the
atomicity ensures that the concurrent $\Summary$ instances cannot overlap, the
absence of local state ensures that $\Summary$ instances cannot influence each
other, even if executed consecutively by the same thread. Hence, we can use a
single thread executing the iteration $\SummaryStar$ in order to explore the
interference caused by $\prod^\infty \Summary$. This justifies the usage of
$\SummaryStar$ for the effect coverage above. The check for statelessness is
similar to the one of effect coverage. If both tests succeed, the analysis
information is guaranteed to be sound.

%
%

\smallskip

\noindent\textbf{Overview of the approach, its advantages, and experimental
evaluation.} 
Overall, our thread-modular analysis proceeds as follows. We employ the CAS
block heuristic to compute candidate summaries. We use these candidates to
determine the interferences in the fixed-point computation. Once the fixed point
has been obtained, we check whether the candidates are valid summaries. If so,
the fixed point contains sound information, and can be used for verification
(or, an on-the-fly computed verification result can be used). Otherwise,
verification fails. Currently, we do not have a~refinement loop because it was
not needed in our experiments.

%
%

Our method overcomes the limitations in the previous approaches as follows.
The summary program, $\Summary$, is quadratic in the syntactic size of the
program---not in the size of the fixed point. The interference step executes the
summary on all views in the current set $X$, which means an effort linear rather
than quadratic in the fixed-point approximant. Moreover, $\Summary$ is often
acyclic and hence needs linear time to execute, as opposed to the worst case
exponential merge or match. In our benchmarks, we needed at most 5 very short
summaries, usually around 3--5 lines of code each. The computation of candidate
summaries (based on cheap and standard static analyses) and their check for
validity are separated from the fixed point, and the cost of both operations is
negligible.

%
%

We implemented our thread-modular analysis with effect summaries on top of our
state-of-the-art tool \cite{DBLP:conf/vmcai/HazizaHMW16} based on thread-modular
reasoning with merge-and-project. We applied the implementation to verify
linearizability in a number of concurrent list implementations. Compared to
\cite{DBLP:conf/vmcai/HazizaHMW16}, we obtain a speed-up of two orders of
magnitude. Moreover, we managed to infer stateless effect summaries for all our
case studies except the DGLM queue~\cite{DBLP:conf/forte/DohertyGLM04} under
explicit memory management (where one needs to go beyond statelesness). However,
we are not aware of any automatic approach that would be able to handle this
algorithm.



\vspace{-3mm}
\section{Effect Summaries on an Example}
\label{Section:Illustration}
\vspace{-2mm}

The main complication for writing lock-free algorithms is to guarantee
robustness under interleavings. The key idea to tackle this issue is to use a
specific update pattern, namely the CAS-blocks discussed in
\cref{sec:introduction}. We now show how CAS blocks are employed in the
Treiber's lock-free stack implementation under garbage collection, the code of
which is given in \cref{code:treibers}. The \code{push} method implements a CAS
block as follows:
\begin{inparaenum}[(1)]
  \item copying the top of stack pointer, \code{top=ToS},
  \item linking the \code{node} to be inserted to the current top of stack, 
    \code{node.next=top}, and
  \item making \code{node} the new top of stack in case no other thread changed 
    the shared state, \code{CAS(ToS,top,node)}.
\end{inparaenum}
Similarly, \code{pop} proceeds by:
\begin{inparaenum}[(1)]
  \item copying the top of stack pointer, \code{top=ToS},
  \item querying its successor, \code{next=top.next}, and
  \item swinging \code{ToS} to that successor in case the stack did not change, 
    \code{CAS(ToS,top,next)}.
\end{inparaenum}


\begin{figure}[t]
\vspace{-3mm}
	\center
	\begin{minipage}{.92\textwidth}
		\begin{minipage}{.52\textwidth}
			\begin{lstlisting}[style=condensed]
struct Node { data_t data; Node next; }
shared Node ToS;

void push(data_t in) {
	Node node = new Node(in);
	while (true) {
		Node top = ToS;
		node.next = top;
		if(CAS(ToS, top, node)){
			return;
}	}	}

S1: atomic {
	/* push */
	Node node = new Node(*);
	node.next = ToS;
	ToS = node;
}
			\end{lstlisting}
		\end{minipage}
		\hfill
		\begin{minipage}{.4\textwidth}
			\begin{lstlisting}[style=condensed]
bool pop(data_t& out) {
	while (true) {
		Node top = ToS;
		if(top == NULL){
			return false;
		}
		Node next = top.next;
		if(CAS(ToS, top, next)){
			out = top.data;
			return true;
}	}	}

S2: atomic {
	/* pop */
	assume(ToS != NULL);
	ToS = ToS.next;
}
S3: atomic { /* skip */ }
			\end{lstlisting}
		\end{minipage}
	\end{minipage}
	\vspace{-5mm}
	\begin{lstlisting}[caption={Pseudo code of the Treiber's lock-free stack \cite{opac-b1015261} and its effect summaries.},label={code:treibers}]
	\end{lstlisting}
	\vspace{-12mm}
\end{figure}

Following the CAS-block idiom, the only statements modifying the shared heap in
the Treiber's stack are the CAS operations. Hence, we identify three types of
effects on the shared heap. First, a successful CAS in \code{push} makes
\code{ToS} point to a newly allocated cell that, in turn, points to the previous
value of \code{ToS}. Second, a successful CAS in \code{pop} moves \code{ToS} to
its successor \code{ToS.next}. Since we assume garbage collection, the removed
element is not freed but remains in the shared heap until collected. Third, the
effect of any other statement on the shared heap is~the~identity.

With the effects of the Treiber's stack identified, we can turn towards finding
an approximation. For that, consider the program fragments from
\cref{code:treibers}: \code{S1} covers the effects of the CAS in \code{push},
\code{S2} covers the effects of the CAS in \code{pop}, and, lastly, \code{S3}
produces the identity-effect covering all remaining statements. Then, the
summary program is $\Summary=\mcode{S1}\oplus\mcode{S2}\oplus\mcode{S3}$.

To obtain the non-trivial summaries \code{S1} and \code{S2}, it suffices to
concentrate on the block of code between the \code{top=ToS}
assignment and the subsequent \code{CAS(ToS,top,_)}
statement. Without going into details (which will be provided in
\cref{sec:computing-effect-summaries}), the summaries result from considering
the code between the two statements atomic, performing simplification of the
code under this atomicity assumption, and including some purely local
initialization and finalization code (such as the allocation in the
\code{push} method).


\vspace{-2mm}
\section{Programming Model} \label{sec:programming-model}
\vspace{-2mm}


A concurrent program $\Program$ is a parallel composition of threads $\Thread$.
The threads are while-programs formed using sequential composition,
non-deterministic choice, loops, atomic blocks, skip, and primitive commands.
The syntax is as follows:
\begin{gather*}	
\Program ::= ~
	\Thread \EBNFalt
	\Program\inpar\Program
\qquad
\Thread ::= ~
	\Thread_1;\Thread_2 \EBNFalt
	\Thread_1\oplus \Thread_2 \EBNFalt
	\Thread^* \EBNFalt
	\atomici{\Thread} \EBNFalt
	\cmdskip \EBNFalt
	\EBNFprimitive
.
\end{gather*}
We use $\Thrd$ for the set of all threads. We also write $\Program^*$ to mean a
program $\Program$ with the Kleene star applied to all threads. The syntax and
semantics of the commands in $\EBNFprimitive$ are orthogonal to our development.
We comment on the assumptions we need in a moment.

We assume programs whose threads implement methods from the interface of the
lock-free data structure whose implementation is to be verified. The fact that,
at runtime, we may find an arbitrary (finite) number of instances of each of the
threads corresponds to an arbitrary number of the methods invoked concurrently.
The verification task is then formulated as proving a designated shared heap
unreachable in all instantiations of the program. Since thread-modular analyses
simultaneously reason over all instantiations of the program, we refrain from
making this parameterization more explicit. Instead, we consider program
instances simply as programs with more copies of the same threads.


We model heaps as partial and finite functions
$h\mkern-5mu:\mkern-4mu\Var\cup\Nat\nrightarrow\Nat$. Hence, we do not
distinguish between the stack and the heap, and let the heap provide valuations
for both the program variables from $\Var$ and the memory cells from $\Nat$. We
use $\Heap$ for the set of all heaps. Initially, the heap is empty, denoted by
$\emp$ with $\dom{\emp}=\emptyset$. We write~$\bot$ if a partial function is
undefined for an argument: $h(e)=\bot$ if $e\not\in\dom{h}$.

We assume each thread has an identifier from $\Tid\subseteq \Nat$. A
\emph{program state} is a pair $\state{s}{\cf}$ where $s\in\Heap$ is the shared
heap and $\cf\mkern-5mu:\mkern-4mu\Tid\to\Thrd\times\Heap$ maps the thread
identifiers to \emph{thread configurations}. A thread configuration is of the
form $\conf{\Thread}{o}$ with $\Thread\in\Thrd$ and $o\in\Heap$ being a heap
owned by $T$. If $\cf=\set{i\to\conf{\Thread}{o}}$ contains a single mapping, we
write simply $\state{s}{\conf{\Thread}{o}}$.

Our development crucially relies on having a notion of separation between the shared heap $s$ and the owned heap $o$ of a thread $T$.
However, the actual definitions of what is owned and what shared are a parameter to our development.
We just require the separation to respect disjointness of the shared and owned heaps and to be defined such that it is preserved across execution of program statements.
The latter is formalized below in~\cref{assumption:sequential-separation}.
To render disjointness formally, we say that a~state $\state{s}{\cf}$ is \emph{separated}, denoted by $\wdef{s}{\cf}$, if, for every $i_1,i_2\in\dom{\cf}$ with $\cf(i_j)=\conf{\Thread_{i_j}}{o_{i_j}}$ and $i_1\neq i_2$, we have $\dom{s}\cap\dom{o_{i_j}}\cap\Nat=\emptyset$ and $\dom{o_{i_1}}\cap\dom{o_{i_2}}\cap\Nat=\emptyset$.
Note that, in order to allow for having thread-local variables, the heaps need to be disjoint only on memory cells (but not on variables), thus the additional intersection with $\Nat$.


We use $\stepprog$ to denote \emph{program steps}. The sequential semantics of
threads is as expected for sequential composition, choice, loops, and skip. An
atomic block $\atomici{\,\Thread}$ summarizes a computation of the underlying
thread $\Thread$ into a single program step. The semantics of primitive commands
depends on the actual set $\EBNFprimitive$. We do not make it precise but
require it to preserve separation in the following sense.
\begin{assumption} \label{assumption:sequential-separation} For every step
$\state{s}{\conf{\Thread}{o}} \stepprog\state{s'}{\conf{\Thread'}{o'}}$ with
$\separate{s}{\conf{\Thread}{o}}$, we have $\separate{s'}{\conf{\Thread'}{o'}}$.
\end{assumption}

The semantics of a concurrent program incorporates the requirement for
separation into its transition rule. A thread may only update the shared heap
and those parts of the heap it owns. No other parts can be modified. Therefore,
we let threads execute in isolation and ensure that the combined resulting state
is separated:
\begin{prooftree}
 	\rname{par}
	\raxiom{
		\state{s}{\cf(i)}
		\stepprog
		\state{s'}{\cf''}
	}
	\raxiom{
		\cf'=\cf[i\to\cf'']
	}
	\raxiom{
		\separate{s'}{\cf'}
	}
	\rconclusion{3}{
		\state{s}{\cf}
		\stepprog
		\state{s'}{\cf'}
	}
\end{prooftree}

Despite a precise notion of separation is not needed for the development of our approach in Section~\ref{sec:effect-summaries}, we give, for illustration, the notion we use in our implementation and experiments.
In the case of garbage collection (like in Java), the owned heap of a thread includes, as usual, its local variables and cells accessible from these variables, which were allocated by the thread, but never made accessible through the shared variables.
The shared heap then contains the shared variables, all cells that were once made accessible from them, as well as cells waiting for garbage collection.
For the case with explicit memory management, we need a more complicated mechanism of ownership transfer where a shared cell can become owned again.
We propose such a mechanism in Section~\ref{sec:generalization-to-explicit-memory-management}.


We assume the computation of the programs under scrutiny to start from an initial state $\init{\Program}=\conf{\sinit}{\cfinit{\Program}}$ where $\sinit$ is the result of an initialization procedure.
The initial thread configurations, denoted by $\cfinit{\Thread}$, are of the form $\conf{\Thread}{\emp}$.
The initialization procedure is assumed to be part of the input program.
We are interested in the shared heaps reachable by program $\Program$ from its initial state:
\begin{align*}
  \Reach{\Program} := \setcond{
    s
  }{
    \exists\: \cf.~
      \init{\Program}
      \stepprogany
      \state{s}{\cf}
  }.
\end{align*}



\section{Interference via Summaries}
\label{sec:effect-summaries}

We now present our new approach to computing the effect of thread interference
steps on the shared heap (corresponding to evaluating the expression
$\mathit{interfere}(X)$ from \cref{sec:introduction} for a set of views $X$) in
a way which is suitable for concurrency libraries. In particular, we introduce a
notion of a \emph{stateless effect summary} $\Summary$: a program whose repeated
execution is able to produce all the effects on the shared heap that the program
under scrutiny, $\Program$, can produce. With a stateless effect summary
$\Summary$ at hand, one can compute $\mathit{interfere}(X)$ by repeatedly
applying $\Summary$ on the views in $X$ until a fixed point is reached.  Here,
statelessness assures that $\Summary$ is applicable repeatedly without any need
to track its local state.

Later, in Section~\ref{sec:computing-effect-summaries}, we provide a heuristic
for deriving \emph{candidates} for stateless effect summaries. Though our
experiments show that the heuristic we propose is very effective in practice,
the candidate summary that it produces is not guaranteed to be an effect
summary, i.e., it is not guaranteed to produce all the effects on the shared
heap that $\Program$ can produce. A candidate summary which is not an effect
summary is called \emph{unsound}. To guarantee soundness of our approach even
when the obtained candidate summary is unsound, we provide a test of soundness
of candidate summaries. Interestingly, as we prove, it is the case that even
(potentially) unsound candidate summaries can be used to check their own
soundness---although this step appears to be cyclic reasoning.

\vspace{-2mm}
\subsection{Stateless Effect Summaries}
\vspace{-1mm}

We start by formalizing the notion of statelessness. Intuitively, a thread is
stateless if it terminates after a single step and disposes its local heap.
Formally, we say that a thread $\Thread$ of a program $\Summary$ is
\emph{stateless} if, for all reachable shared heaps $s\in\Reach{\Summary^*}$ and
all transitions $\state{s}{\cfinit{\Thread}}\stepprog\state{s'}{\cf}$, we have
$\cf=\conf{\cmdskip}{\emp}$. A program $\Summary$ is stateless if so are all its
threads. Note that statelessness should hold from all reachable shared heaps
rather than from just all heaps. While an atomic execution to $\cmdskip$ would
be easy to achieve from all heaps, a clean-up yielding $\emp$ can only be
achieved if we have control over the thread-local heap. Also note that
statelessness basically requires a thread to consist of a top-level atomic block
to ensure termination in a single step.

Next, we define the \emph{effects} of a program $\Program$, denoted by
$\effects{\Program}\subseteq\Heap\times\Heap$, to be the set
$\effects{\Program}=\setcond{\pair{s}{s'}}{\init{\Program}\stepprogany\state{s}{\cf}
\stepprog \state{s'}{\cf'}}.$
This set generalizes the reachable shared heaps, $\Reach{\Program}$: it contains
all atomic (single-step) updates $\Program$ performs on the heaps from
$\Reach{\Program}$.

Altogether, a program $\Summary$ is a~\emph{(stateless) effect summary} of
$\Program$ if it is stateless and $\effects{\Thread\inpar\Summary^*} \subseteq
\effects{Q^*}$ holds for all threads $\Thread\in\Program$. We refer to this
inclusion as the \emph{effect inclusion}. Intuitively, it states that
$\Summary^*$ subsumes all the effects $\Thread$ may have under interference with
$\Summary^*$. The below lemma, proved in the appendix, shows that the effect
inclusion can be used to check whether a candidate summary is indeed an effect
summary. Moreover, the check can deal with the different threads separately.

\begin{lemma} \label{new:thm:effect-inclusion-lifting} If $\Summary$ is
stateless and $\effects{\Thread\inpar\Summary^*}\subseteq\effects{\Summary^*}$
holds for all $\mkern+1mu\Thread\mkern-1mu\in\Program\!$, then we have
$\effects{\Program}\subseteq\effects{\Program\inpar\Summary^*}\subseteq\effects{\Summary^*}$.
\end{lemma}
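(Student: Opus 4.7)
The first inclusion $\effects{\Program}\subseteq\effects{\Program\inpar\Summary^*}$ is essentially free: any execution of $\Program$ lifts to an execution of $\Program\inpar\Summary^*$ in which every Kleene-starred thread of $\Summary^*$ takes zero iterations and immediately terminates, so the sequence of shared-heap updates, and hence the induced effect set, is unchanged. The interesting inclusion is $\effects{\Program\inpar\Summary^*}\subseteq\effects{\Summary^*}$, which I plan to prove by a joint induction on the length of execution prefixes $\init{\Program\inpar\Summary^*}\stepprogany\state{s}{\cf}$ maintaining two invariants: (i) the shared heap $s$ lies in $\Reach{\Summary^*}$; and (ii) for every thread $\Thread\in\Program$ with local state $o$ in $\cf$, there is an execution $\init{\Thread\inpar\Summary^*}\stepprogany\state{s}{\cf_\Thread}$ in which $\Thread$ again has local state $o$.

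Given these invariants, the target inclusion follows by a case split on the last step of an arbitrary witness $\init{\Program\inpar\Summary^*}\stepprogany\state{s}{\cf}\stepprog\state{s'}{\cf'}$. If the step is performed by a $\Summary^*$ thread, statelessness lets me replay it as a fresh atomic iteration appended to some execution of $\Summary^*$ reaching $s$, whose existence is granted by (i). If it is performed by some $\Thread\in\Program$, then (ii) supplies a run of $\Thread\inpar\Summary^*$ ending in a configuration from which $\Thread$ can take exactly the same sequential step, so $(s,s')\in\effects{\Thread\inpar\Summary^*}$, and the hypothesis of the lemma transports the pair into $\effects{\Summary^*}$.

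The base case of the induction is immediate. In the inductive step, (i) is preserved by the same dichotomy. For (ii) at a thread $\Thread$, a step by $\Thread$ itself is replayed directly on the witness run for $\Thread$; a step by some other $\Thread'\in\Program$ is the delicate case, since $\Thread$'s local state does not change but the shared heap does. I intend to handle it by combining (ii) for $\Thread'$ with the hypothesis to obtain $(s,s')\in\effects{\Summary^*}$, and then splicing the corresponding atomic $\Summary^*$ iteration onto the end of the witness for $\Thread$. The main obstacle is exactly this splicing: one must verify that a $\Summary^*$ iteration realising the transition from $s$ to $s'$ is always enabled at the end of the $\Thread\inpar\Summary^*$ witness. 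Both facets of statelessness -- atomicity and empty residual local heap -- are crucial here, as they make every $\Summary$ iteration a function of the current shared heap alone, so it can be invoked on any execution reaching $s$ without disturbing $\Thread$'s local state. Auxiliary observations that sequential thread steps are context-independent and that \cref{assumption:sequential-separation} preserves separation along the constructed witnesses round off the argument.
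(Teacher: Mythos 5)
Your proposal is correct and follows essentially the same route as the paper's proof: the first inclusion by letting $\SummaryStar$ stutter, then an induction over executions of $\Program\inpar\SummaryStar$ maintaining, for each thread, a witness run of $\Thread\inpar\SummaryStar$ with the same shared heap and local configuration, transporting every mover's effect into $\effects{\SummaryStar}$ via the hypothesis (or via statelessness when the mover is a $\SummaryStar$-thread), and splicing the corresponding $\Summary$-iteration onto the other witnesses. When writing it out you will need the one strengthening that the paper builds into its invariant (via \cref{new:thm:aux:main-aux} and by resetting the witnesses' $\SummaryStar$-components to $\cfinit{\SummaryStar}$): all $\SummaryStar$-threads, both in the $\Program\inpar\SummaryStar$ run and in each witness, must be known to keep empty owned heaps and to sit at Kleene-star boundaries, since your invariants (i) and (ii) alone do not yet justify replaying a $\SummaryStar$-move as a fresh iteration nor appending further iterations to a witness.
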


In what follows, we describe our novel thread-modular analysis based on
effect summaries. We assume that, in addition to the program $\Program$ under
scrutiny, we have a program $\Summary$ which is a candidate for being a summary of
$\Program$ (obtained, e.g., by the heuristic that we provide in
Section~\ref{sec:computing-effect-summaries}). In Section~\ref{sec:fixpoint}, we
first provide a fixed point computation where the interference step is
implemented by a repeated application of the candidate summary $\Summary$. We
show that if the candidate summary $\Summary$ is an effect summary, then the
fixed point we compute is a~conservative over-approximation of the reachable
shared heaps of $\Program$. Next, in Section~\ref{sec:soundness-test}, we show
that the fact whether or not $\Summary$ is indeed an effect summary of
$\Program$ can be checked efficiently on top of the computed fixed point (even
though the fixed point need not over-approximate the reachable shared heaps of~$\Program$).

In the case that the test of Section~\ref{sec:soundness-test} fails, $\Summary$
is not an effect summary of $\Program$, and our verification fails with no
definite answer. As a future work, one could think of proposing some way of
patching the summaries based on feedback from the failed test. However, in
our experiments, using the heuristic computation of candidate summaries proposed
in Section~\ref{sec:computing-effect-summaries}, this situation has not happened
for any program where a stateless effect summary exists. In the only experiment
where our approach failed (the DGLM queue under explicit memory management,
which has not been verified by any other fully automatic tool), the notion of
stateless effect summaries itself is not strong enough. Hence, a perhaps
more interesting question for future work is how to further generalize the notion of effect summaries.

\subsection{Summaries in the Fixed-Point Computation}\label{sec:fixpoint}

To explore the reachable shared heaps of program $\Program$, we suggest a thread-modular analysis which explores the reachable states of threads $\Thread\in\Program$ in isolation.
To account for the possible thread interleavings of the original program, we apply interference steps to the threads $\Thread$ by executing the provided summary $\Summary$.
Conceptually, this process corresponds to exploring the state space of the two-thread programs $\Thread\inpar\SummaryStar$ for all syntactically different $\Thread\in\Program$.
Technically, we collect the reachable states of those programs in the following least fixed point:
\begin{align*}
	X_0 &\:=\: \setcond{\state{\sinit}{\conf{\Thread}{\emp}}}{\Thread\in\Program}
	\\
	X_{i+1} &\:=\: X_i \:\cup\: \Post{X_i} \:\cup\: \Env{X_i}\ .
\intertext{%
	Since $\Summary^*$ has no internal state, the analysis only keeps the thread-local configurations of $\Thread$.
	Functions $\Post{\cdot}$ and $\Env{\cdot}$ compute sequential steps (steps~of~$\Thread$) and interference steps (steps of $\Summary^*$), respectively, as follows:
}
	\Post{X_i} &\:=\: \setcond{
		\state{s'}{\cf'}
	}{
		\exists\: \state{s}{\cf}\in X_i.
		~
		\state{s}{\cf}\stepprog\state{s'}{\cf'}
	}
	\\
	\Env{X_i} &\:=\: \setcond{
		\state{s'}{\cf\phantom{'}}
	}{
		\separate{s'}{\cf}
		\:\wedge\:
		\exists\: s,\cf'.
		~
		\\&\qquad\qquad\qquad\;\;
		\state{s}{\cf}\in X_i
		\:\wedge\:
		\state{s}{\cfinit{\Summary}}\stepprog\state{s'}{\cf'}
	}\ .
\end{align*}
Function $\Post{X_i}$ is standard.
For $\Env{X_i}$ we apply $\Summary$ to each configuration $\state{s}{\cf}\in X_i$ by letting it start from the shared heap $s$ and its initial thread-local configuration $\cfinit{\Summary}$.
Then we extract the updated shared heap, $s'$, resulting in the post configuration \state{s'}{\cf}.
Altogether, this procedure applies to the views in $X_i$ the shared heap updates dictated by $\Summary$.
The thread-local configurations, $\cf$, of threads $\Thread$ are not changed by interference.
This locality follows from statelessness. 

The following lemma, which is proven in the appendix, states that the set of shared heaps collected from the above fixed point is indeed the set of reachable shared heaps of all $\Thread\inpar\Summary^*$.
Let $X_k$ be the fixed point and define $\ReachFP=\setcond{s}{\exists\cf.\,\state{s}{\cf}\in X_k}$.
\begin{lemma}
	\label{new:thm:fixed-point-reachset}
	If $\Summary$ is a summary of $\Program$, then
	$\ReachFP = \bigcup_{\Thread\in\Program} \Reach{\Thread\inpar\Summary^*}$.
\end{lemma}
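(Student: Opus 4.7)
}
The plan is to show the two inclusions separately. I would first prove $\ReachFP \subseteq \bigcup_{\Thread\in\Program} \Reach{\Thread\inpar\Summary^*}$ by induction on the fixed-point iteration index $i$. For the base case, every element of $X_0$ is of the form $\state{\sinit}{\conf{\Thread}{\emp}}=\init{\Thread\inpar\Summary^*}$, which is trivially reachable in $\Thread\inpar\Summary^*$. For the step case, given a state $\state{s}{\conf{\Thread}{o}}\in X_i$ which, by the induction hypothesis, is the projection of a reachable state of $\Thread\inpar\Summary^*$, I would treat the two ways it could have been added: a $\Post{\cdot}$ step is exactly a sequential step of $\Thread$ and therefore lifts to a step of $\Thread\inpar\Summary^*$ by the \textsc{par} rule; an $\Env{\cdot}$ step is a single transition $\state{s}{\cfinit{\Summary}}\stepprog\state{s'}{\cf'}$, which, by statelessness of $\Summary$ (applicable since $s\in\Reach{\Summary^*}$ thanks to \cref{new:thm:effect-inclusion-lifting} applied to $\Summary$'s effect inclusion), has $\cf'=\conf{\cmdskip}{\emp}$, and hence corresponds to firing a fresh copy of $\Summary$ inside $\Summary^*$ in parallel with the unchanged $\Thread$-configuration $\conf{\Thread}{o}$.

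For the reverse inclusion, I would proceed by induction on the length of executions of $\Thread\inpar\Summary^*$. Each transition of such a composition, by the \textsc{par} rule, is either a step of $\Thread$ itself or a step of one of the $\Summary$-copies contained in $\Summary^*$. A step of $\Thread$ can be mirrored in the fixed point by a $\Post{\cdot}$ application. A step originating from a $\Summary$-copy, because $\Summary$ is stateless, starts from a fresh $\cfinit{\Summary}$ and terminates in a single transition (\ie the top-level \code{atomic} block), leaving the owned heap empty again; thus it is exactly what $\Env{\cdot}$ computes, and the thread-local component $\conf{\Thread}{o}$ of $\Thread\inpar\Summary^*$ is untouched. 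Hence the shared-heap projection of any reachable state of $\Thread\inpar\Summary^*$ eventually appears in some $X_i\subseteq X_k$, so its shared heap is in $\ReachFP$.

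The main obstacle is the subtle coupling between statelessness and reachability: statelessness is only guaranteed for shared heaps in $\Reach{\Summary^*}$, whereas $\Env{\cdot}$ is evaluated on heaps $s$ produced by the combined analysis of $\Thread$ and $\Summary$. I would discharge this by invoking \cref{new:thm:effect-inclusion-lifting}, which, from $\Summary$ being a summary of $\Program$, yields $\effects{\Thread\inpar\Summary^*}\subseteq\effects{\Summary^*}$ for every $\Thread\in\Program$, and therefore every shared heap reached along the analysis lies in $\Reach{\Summary^*}$; this licenses the single-step, empty-owned-heap conclusion needed in both directions. A secondary bookkeeping point is that $\Summary^*$ may spawn unboundedly many concurrent $\Summary$-instances, but, since each instance is atomic and leaves no residual local heap, the collection of $\Summary$-threads active at any moment can always be replaced by a single $\cfinit{\Summary}$ for the purpose of the next interference step, which is precisely what $\Env{\cdot}$ does.
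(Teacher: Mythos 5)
Your plan is correct and, for the first inclusion, essentially mirrors the paper's proof: an induction over the fixed-point iterations maintaining that every $\state{s}{\cf}\in X_i$ is witnessed by an execution of some $\Thread\inpar\SummaryStar$, with $\Post{\cdot}$ lifted via the \textsc{par} rule and $\Env{\cdot}$ replayed as a fresh stateless $\Summary$-instance. The only organizational difference there is that the paper carries $s\in\Reach{\SummaryStar}$ explicitly as a second induction invariant (re-established after $\Post$-steps via the effect inclusion and \cref{new:thm:aux:stateless-effects-give-repeated-executions}), whereas you re-derive it from $\Reach{\Thread\inpar\SummaryStar}\subseteq\Reach{\SummaryStar}$; both discharge the statelessness side condition. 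For the reverse inclusion your route genuinely differs: you case-split each step of $\Thread\inpar\SummaryStar$ into a $\Thread$-step (mirrored by $\Post{\cdot}$) and a $\Summary$-copy step (mirrored by $\Env{\cdot}$), which forces you to track the \emph{current} local configuration $\cf(0)$ of $\Thread$ in $X_k$ and to establish a shape invariant on the $\SummaryStar$-side configurations (each copy sits at $\conf{\Thread_i^*}{\emp}$, $\conf{\Thread_i;\Thread_i^*}{\emp}$, or a skip variant, with empty owned heap) — exactly the content of the paper's auxiliary \cref{new:thm:aux:main-aux}, which you gesture at but would have to prove. The paper instead avoids tracking $\cf(0)$ altogether: it converts \emph{every} step of $\Thread\inpar\SummaryStar$, including $\Thread$'s own, into an effect, pushes it through the effect inclusion into $\effects{\SummaryStar}$, turns it into a single stateless step via \cref{new:thm:aux:stateless-effects-are-stateless-steps}, and covers it with $\Env{\cdot}$ applied to $\state{s}{\cfinit{\Thread}}\in X_k$ — a weaker but sufficient invariant, since $\ReachFP$ only collects shared heaps. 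Your stronger invariant is also correct and is in fact what the paper needs later for \cref{new:thm:checks-iff-summarization}, but it costs the extra configuration-shape lemma; the paper's version is leaner here. Two small points to tighten: the per-thread inclusion $\effects{\Thread\inpar\SummaryStar}\subseteq\effects{\SummaryStar}$ is part of the \emph{definition} of $\Summary$ being a summary, not the conclusion of \cref{new:thm:effect-inclusion-lifting} (whose conclusion concerns $\Program$); and when lifting $\Post{\cdot}$ and $\Env{\cdot}$ steps into the concurrent semantics you should explicitly note that the separation premise of rule \textsc{(par)} holds because, by \cref{assumption:sequential-separation} and statelessness, all other threads carry the empty owned heap $\emp$.
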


With the state space exploration in place, we can turn towards a soundness result of our method:
given an appropriate summary $\Summary$, the fixed point computation over-approximates the reachable shared heaps of $\Program$.
\begin{theorem}
	\label{new:thm:summary-implies-soundness}
	If $\Summary$ is a summary of $\Program$, then we have $\Reach{\Program}\subseteq\Reach{\Summary^*}=\ReachFP$.
\end{theorem}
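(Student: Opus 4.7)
The plan is to split the statement into the two separate sub-claims $\Reach{\Program}\subseteq\Reach{\Summary^*}$ and $\Reach{\Summary^*}=\ReachFP$, and to reduce both to a single auxiliary principle: whenever two programs $A$ and $B$ share the same initial shared heap $\sinit$ and satisfy $\effects{A}\subseteq\effects{B}$, then $\Reach{A}\subseteq\Reach{B}$. This principle follows by a straightforward induction on the length of a $\Program$-style trace $\sinit=s_0\stepprog s_1\stepprog\cdots\stepprog s_n=s$ in $A$. The base case is immediate since $\sinit\in\Reach{B}$. For the inductive step, a step $s_i\stepprog s_{i+1}$ from a reachable $s_i$ in $A$ yields $(s_i,s_{i+1})\in\effects{A}\subseteq\effects{B}$; unwinding the definition of $\effects{B}$ gives a configuration $\state{s_i}{\cf}$ reachable in $B$ that steps to some $\state{s_{i+1}}{\cf'}$, which immediately places $s_{i+1}$ in $\Reach{B}$.

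With that bridging principle in hand, the first inclusion $\Reach{\Program}\subseteq\Reach{\Summary^*}$ is obtained by feeding the inclusion $\effects{\Program}\subseteq\effects{\Summary^*}$ produced by Lemma~\ref{new:thm:effect-inclusion-lifting} into the principle; the hypothesis that $\Summary$ is a summary supplies the statelessness and per-thread effect inclusions the lemma needs. For the equation $\Reach{\Summary^*}=\ReachFP$, I would use Lemma~\ref{new:thm:fixed-point-reachset} to rewrite $\ReachFP$ as $\bigcup_{\Thread\in\Program}\Reach{\Thread\inpar\Summary^*}$ and then prove both containments separately. The direction $\Reach{\Summary^*}\subseteq\Reach{\Thread\inpar\Summary^*}$ for some (any) $\Thread\in\Program$ is almost immediate: the \rname{par} rule allows every $\Summary^*$-transition to be lifted to a $\Thread\inpar\Summary^*$-transition in which $\Thread$ stays idle in its initial configuration $\cfinit{\Thread}$, so every shared heap reachable by $\Summary^*$ alone is also reachable in parallel. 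In the opposite direction, the summary hypothesis provides exactly $\effects{\Thread\inpar\Summary^*}\subseteq\effects{\Summary^*}$ for each $\Thread\in\Program$, so the bridging principle directly gives $\Reach{\Thread\inpar\Summary^*}\subseteq\Reach{\Summary^*}$ term-by-term, and taking the union preserves the inclusion.

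The main obstacle I expect is purely conceptual rather than technical: one has to be careful not to confuse step-level inclusions (captured by $\effects{\cdot}$) with run-level inclusions (captured by $\Reach{\cdot}$). The reason the bridging principle is not circular is that the definition of $\effects{B}$ already bakes in reachability of the source heap in $B$, so the inductive step does not need to separately witness how $s_i$ was reached in $B$; it inherits such a witness from the definition and only needs to append one further step. Once this observation is made, both sub-claims reduce to applications of Lemmas~\ref{new:thm:effect-inclusion-lifting} and~\ref{new:thm:fixed-point-reachset}, and the remaining bookkeeping (same initialization procedure shared by $\Program$ and $\Summary^*$, existence of idle $\inpar$-transitions for a single thread, and closure under the $\oplus$-combination used to build $\Summary$) is routine given the programming model of Section~\ref{sec:programming-model}.
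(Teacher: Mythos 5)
Your proposal is correct and follows essentially the same route as the paper: the first inclusion comes from Lemma~\ref{new:thm:effect-inclusion-lifting} plus the observation that an effect inclusion between programs with the same $\sinit$ yields a reachability inclusion (the paper compresses this into ``by definition''), and the equality $\Reach{\Summary^*}=\ReachFP$ comes from Lemma~\ref{new:thm:fixed-point-reachset} combined with the trivial lifting of $\Summary^*$-runs to $\Thread\inpar\Summary^*$ and the per-thread effect inclusion in the other direction. Your explicit ``bridging principle'' is exactly the paper's implicit step (and in fact needs no induction, since only the last step of a run matters), so the two arguments coincide.
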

The rational behind the theorem is as follows.
Relying on $\Summary$ being a summary of $\Program$ provides the effect inclusion.
So, \cref{new:thm:effect-inclusion-lifting} yields $\effects{\Program\inpar\Summary^*}\subseteq\effects{\Summary^*}$.
From the definition of effects we can then conclude $\Reach{\Program\inpar\Summary^*}\subseteq\Reach{\SummaryStar}$.
Thus, we have $\Reach{\Program}\subseteq\Reach{\Summary^*}$ because $\Reach{\Program}\subseteq\Reach{\Program\inpar\Summary^*}$ is always true.
This shows the first inclusion.
Similarly, the effect inclusion gives $\Reach{\Thread\inpar\Summary^*}\subseteq\Reach{\Summary^*}$ by the definition of reachability.
Hence, we conclude using \cref{new:thm:fixed-point-reachset}.

\subsection{Soundness of Summarization}\label{sec:soundness-test}

Soundness of our method, as stated by \cref{new:thm:summary-implies-soundness} above, is conditioned by $\Summary$ being a summary of $\Program$.
In our framework, $\Summary$ is heuristically constructed and there is no guarantee that it really summarizes $\Program$.
Hence, for our analysis to be sound, we have to check summarization;
we have to establish
\begin{inparaenum}[(1)]
	\item the effect inclusion, and
	\item statelessness~of~$\Summary$.
\end{inparaenum}

To that end, we check that
\begin{inparaenum}[(1)]
	\item every update $\Thread$ performs on the shared heap in the system $\Thread\inpar\Summary^*$ can be mimicked by $\Summary$, and that
	\item every execution of $\Summary$ terminates in a single step and does not retain persistent local state.
\end{inparaenum}
We implement those checks on top of the fixed point, $X_k$, as follows:
\begin{align*}
	\forall\,\state{s}{&\cf}\in X_k ~
	\forall\,s',\cf', i\ \exists \cf''. ~~
	\\
	&\state{s}{\cf}
	\stepprog
	\state{s'}{\cf'}
	\implies
	\conf{s}{\cfinit{\Summary}}
	\stepprog
	\conf{s'}{\cf''}
	\tag{\textsc{chk-mimic}}
	\label{new:fp:check:mimic-effects}
	\;\wedge
	\\
	&\state{s}{\cfinit{\Summary}(i)}
	\stepprog
	\state{s'}{\cf'}
	\implies
	\cf'=\conf{\mathtt{skip}}{\emp}
	~~
	\tag{\textsc{chk-stateless}}
	\label{new:fp:check:stateless}
\end{align*}
The above properties indeed capture our intuition.
The former,  \eqref{new:fp:check:mimic-effects}, states that, for every explored $\Thread$-step 
of the form $\state{s}{\cf}\stepprog\state{s'}{\cf'}$, the effect $\pair{s}{s'}$ is also an effect of $\Summary$.
That is, executing $\Summary$ starting from $s$ yields $s'$.
This establishes the effect inclusion as required by \cref{new:thm:effect-inclusion-lifting}.
The latter check, \eqref{new:fp:check:stateless}, states that every thread of $\Summary$ must terminate in a single step and dispose its owned heap.
This constraint is relaxed to those shared heaps which have been explored during the fixed-point computation.
That is, it ensures statelessness of $\Summary$ on all heaps from $\ReachFP$.
The key aspect is to guarantee that $\ReachFP$ includes $\Reach{\SummaryStar}$ as required by the definition.
We show that this inclusion follows from the check.

The above checks rely on the fixed point, which, in turn, is computed using the candidate summary $\Summary$.
That is, we use $\Summary$ to prove its own correctness.
Nevertheless, our development results in a sound analysis as stated by the following theorem, the proof of which can be found in the appendix.
\begin{theorem}
	\label{new:thm:checks-iff-summarization}
	The fixed point $X_k$ satisfies \eqref{new:fp:check:mimic-effects} and \eqref{new:fp:check:stateless} if and only if $\Summary$ is a summary of $\Program$.
\end{theorem}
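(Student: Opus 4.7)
The plan is to prove the two directions of the biconditional separately. The forward direction---that $\Summary$ being a summary implies the two checks---rests entirely on results already at hand: \cref{new:thm:fixed-point-reachset} yields $\ReachFP = \bigcup_{\Thread\in \Program} \Reach{\Thread\inpar\Summary^*}$, and \cref{new:thm:summary-implies-soundness} yields $\ReachFP = \Reach{\Summary^*}$. Thus every $\state{s}{\cf} \in X_k$ is reachable in some $\Thread\inpar\Summary^*$; any sequential $\Thread$-step $\state{s}{\cf}\stepprog\state{s'}{\cf'}$ produces an effect in $\effects{\Thread\inpar\Summary^*}\subseteq\effects{\Summary^*}$, and since statelessness forces each step of $\Summary^*$ to be a single-shot atomic iteration of some thread $i$ of $\Summary$ started from $\cfinit{\Summary}(i)$, this yields the $\Summary$-step required by \eqref{new:fp:check:mimic-effects}. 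Check \eqref{new:fp:check:stateless} then follows immediately: statelessness of $\Summary$ holds on $\Reach{\Summary^*}$ by assumption, and this set coincides with $\ReachFP$.

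The backward direction is the substantive one: I must derive statelessness of $\Summary$ and the effect inclusion from the checks alone, without being allowed to invoke \cref{new:thm:fixed-point-reachset} or \cref{new:thm:summary-implies-soundness} (both of which presuppose what is being proved). The plan is a two-stage induction on execution length. First, I would establish $\Reach{\Summary^*}\subseteq\ReachFP$: the base uses $\state{\sinit}{\conf{\Thread_0}{\emp}}\in X_0$ for some $\Thread_0\in\Program$, and the inductive step argues that, given $s_n\in\ReachFP$ witnessed by a view $\state{s_n}{\conf{\Thread_0}{o_0}}\in X_k$, any $\Summary^*$-step from $s_n$ to $s_{n+1}$ is, by \eqref{new:fp:check:stateless} applied to that view, effect-equivalent to a single atomic step of a fresh thread of $\Summary$ started in $\cfinit{\Summary}$; the definition of $\Env{\cdot}$ then places $\state{s_{n+1}}{\conf{\Thread_0}{o_0}}$ in $X_k$, so $s_{n+1}\in\ReachFP$. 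Statelessness of $\Summary$ on $\Reach{\Summary^*}$ is then a direct consequence of \eqref{new:fp:check:stateless}. Second, I would show by a parallel induction that for every $\Thread\in\Program$ each reachable state of $\Thread\inpar\Summary^*$ has its $\Thread$-projection in $X_k$, using $\Post{\cdot}$ for $\Thread$-steps and $\Env{\cdot}$ for interfering $\Summary^*$-steps (the latter licensed by the statelessness already established). The effect inclusion then drops out: a $\Thread$-step effect at a reachable view is mimicked by $\Summary$ via \eqref{new:fp:check:mimic-effects}, and a $\Summary^*$-step effect is already a step of $\Summary^*$; reachability of $s$ in pure $\Summary^*$ along the witnessing trace is obtained by replacing each $\Thread$-step by its mimicking $\Summary$-iteration.

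The main obstacle is the apparent circularity: \eqref{new:fp:check:stateless} only constrains $\Summary$ on heaps from $\ReachFP$, and $\ReachFP$ itself depends on $\Summary$ through $\Env{\cdot}$. The joint induction above resolves this cleanly because the inductive hypothesis supplies exactly the premise $s_n\in\ReachFP$ that \eqref{new:fp:check:stateless} requires in order to fire. A secondary technical point is the separation side-condition in $\Env{\cdot}$, which is handled by noting that statelessness forces $\Summary$ to own no persistent heap, so by \cref{assumption:sequential-separation} the cells it modifies or allocates remain disjoint from any preserved owned heap $o_0$ carried through the interference step.
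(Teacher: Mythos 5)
Your overall route is the one the paper takes: the easy direction goes through \cref{new:thm:fixed-point-reachset}, \cref{new:thm:summary-implies-soundness}, the effect inclusion and single-step mimicking of stateless effects; the hard direction breaks the circularity by inductions over executions that feed the $X_k$-membership supplied by the induction hypothesis into \eqref{new:fp:check:stateless} and use $\Post{\cdot}$/$\Env{\cdot}$-closure of $X_k$. However, two justifications in your backward direction do not hold as written. First, you claim that any $\SummaryStar$-step from $s_n$ is, ``by \eqref{new:fp:check:stateless}'', a single-shot step of a fresh $\Summary$-thread started in $\cfinit{\Summary}$. The check only constrains steps that start in $\cfinit{\Summary}(i)$, i.e., in $\conf{\Thread_i}{\emp}$; before statelessness is established, a candidate thread could be mid-iteration, retaining control and local heap, and its later steps are not covered by the check at all. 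You must strengthen the induction hypothesis with a configuration-shape invariant (every $\Summary$-thread is in one of $\conf{\cmdskip}{\emp}$, $\conf{\Thread_i^*}{\emp}$, $\conf{\Thread_i;\Thread_i^*}{\emp}$, $\conf{\cmdskip;\Thread_i^*}{\emp}$), which \eqref{new:fp:check:stateless} then preserves; this is exactly the invariant the paper's auxiliary induction carries, and it is also needed in your second induction and in the trace-replacement argument for the effect inclusion. Second, your treatment of the separation side-condition of $\Env{\cdot}$ is wrong: \cref{assumption:sequential-separation} relates the post shared heap to the \emph{stepping} thread's owned heap, and says nothing about disjointness from the owned heap $o_0$ of an unrelated witnessing view; a summary step could, e.g., allocate an address lying in $o_0$, so $\state{s_{n+1}}{\conf{\Thread_0}{o_0}}$ need not be in $\Env{X_k}$. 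The repair is to carry the empty-owned-heap view $\state{s_n}{\conf{\Thread_0}{\emp}}$ through your first induction (then separation is trivial), respectively to run the induction over $\Thread\inpar\SummaryStar$ executions as the paper does, where separation of the added view is inherited from Rule~(\textsc{par}) of the witnessing execution.

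Two further, smaller slips. In the forward direction, the statements of \cref{new:thm:fixed-point-reachset} and \cref{new:thm:summary-implies-soundness} concern shared heaps only; they do not give that every \emph{view} $\state{s}{\cf}\in X_k$ is reachable in some $\Thread\inpar\SummaryStar$, which is what you need to place $\pair{s}{s'}$ in $\effects{\Thread\inpar\SummaryStar}$ before applying the effect inclusion for \eqref{new:fp:check:mimic-effects}. The paper gets this view-level reachability from the induction \emph{inside} the proof of \cref{new:thm:fixed-point-reachset}, Part~(i); you need that stronger fact, not the lemma statement (your handling of \eqref{new:fp:check:stateless} via $\ReachFP=\Reach{\SummaryStar}$ is fine, as only the shared heap matters there). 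Finally, in the interference case of your second induction, ``statelessness already established'' is not an available licence: statelessness is known only on $\Reach{\SummaryStar}$, whereas the hypothesis gives $s\in\ReachFP$, and $\ReachFP\subseteq\Reach{\SummaryStar}$ is not yet proved at that point; the correct licence is, again, \eqref{new:fp:check:stateless} applied to the $X_k$-view provided by the induction hypothesis.
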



\section{Computing Effect Summaries}
\label{sec:computing-effect-summaries}
\vspace{-2mm}

We now provide our heuristic for computing effect summaries. 
It is based on CAS blocks between an
assignment \code{t=T}, denoted as \emph{checked assignment}, and a CAS
statement \code{CAS(T,t,x)}\!, denoted as \emph{checking CAS} below.
Since we compute a summary for each such block, the number of summaries
is at most quadratic in the size of the input.

In what follows, consider some method $\Method$ given by its control-flow graph~(CFG)
$G=(V, E, \vinit, \vfinal)$. The CFG has a~unique initial and a~unique final
state, which we will use in our construction. Return commands are assumed to
lead to the final state. As we are only interested in the effect on the shared
heap, we drop return values from return commands. Likewise, we skip assignments
to output parameters unless they are important for the flow of control in $\Method$.
We assume the summaries to execute with non-deterministic input values, and so we replace
every input parameter with a symbolic value $*$. Conditionals, loops, and CAS
commands are represented by two edges, for the successful and failing execution,
respectively. Let $\easgn:=(\vasgns, \mcode{t=T}, \vasgnt)$ be the CFG edge of the
checked assignment, and let the successful branch of the checking CAS be $\ecas
:= (\vcas, \mcode{CAS(T,t,x)}, \vcassuc)$. Next, let $\easgnp := (\vasgns, \mcode{t=T},
\vasgntp)$ be a~copy of the checked assignment to be used as the beginning of
the CAS block, and let $\ecasp := (\vcas, \mcode{CAS(T,t,x)}, \vcassucp)$ be a copy of
the checking CAS to be used as the end of the CAS block. Here, $\vasgntp$ and
$\vcassucp$ are fresh nodes.

To give a concise description of effect summaries, the following shortcuts will be helpful. 
We write $\havocof{G}$ for the CFG obtained from $G$ by replacing
each occurrence of a shared variable by a non-deterministic value $*$. By $G - S$, we mean
the CFG obtained from $G$ by dropping all edges carrying commands from the set
$S$. Given nodes $v_1$ and $v_2$, we denote by $G(v_1,v_2)$ the CFG obtained
from $G$ by making $v_1$/$v_2$ the initial/final node, respectively. Given two
CFGs $G$ and $G'$, we define $G;G'$ as their disjoint union where the single
final state of $G$ is merged with the single initial state of $G'$. Finally, we
allow compositions $e; G$ and $G; e$ of a CFG $G$ with a single edge $e$, by
viewing $e$ as a CFG consisting of a single edge with the initial/final nodes
being the initial/final nodes of $e$,~respectively.



The construction of the summary proceeds in two steps. First we identify the CAS
block and create the control-flow structure, then we clean it up using data flow
analysis and generate the final code of the summary.
Note that the clean-up step is optional but generates a concise form
beneficial for verification.


\paragraph{Step 1: Control-flow structure.}

A summary consists of an initialization phase, followed by the CAS block, and
a finalization phase.
The first step results in the CFG
$$\Ginit;\Gblock;\Gfinal\ .
$$
The guiding theme of the construction is to preserve all sequences of commands
that may lead through the CAS block. 

In the initialization phase, which is intended for purely local initialization,
the method is assumed to be interrupted by other threads in the sense that the
values of shared pointers may spontaneously change. Therefore, we replace all
dependencies on shared variables by non-deterministic assignments. Moreover, all return
commands are removed since we have not yet passed the CAS block. Eventually,
when arriving at the $\vasgns$ location, the summary non-deterministically
guesses that the CAS block should begin, and so the control is transferred to it
via the $\easgnp$ edge. Hence, the~initialization~is:
\begin{align*} \Ginit\ :=\ (\havocof{G}-\set{\mcode{return}})(\vinit, \vasgns)\
. 
\end{align*}

The CAS block begins with the $\easgnp$ edge, \ie, with the checked assignment,
and ends with the $\ecasp$ edge, \ie, the checking CAS statement. 
%
%
From the CAS block, we remove all control-flow edges with assignments \code{t=T} as
we fixed the checked assignment when entering the CAS block (other assignments
of the form \code{t=T}, if present, will give rise to other CAS blocks; and a
repeated execution of the same checked assignment then corresponds to a repeated
execution of the summary). We also remove the return commands as the
finalization potentially still has to free owned heap. Failing executions of the
checking CAS do not leave the CAS block (and typically get stuck due to the
removed checked assignments). Successful executions may leave the CAS block, but
do not have to. Eventually, the summary guesses the last successful execution of
the checking CAS and enters the finalization phase. Hence, we get the following
code: \begin{align*} \Gblock\ :=\ \easgnp ; ((G - \set{\mcode{return},
\mcode{t=T}})(\vasgnt, \vcas)) ; \ecasp \ .  \end{align*}

Sometimes, the checked assignment can use local
variables assigned prior to the checked assignment. In such a case, we add
copies of edges with these assignments before the $\easgnp$ edge.  This happens,
e.g., in the enqueue procedure of Michael\&Scott's lock-free queue where the sequence
\code{tail=Tail;next=tail.next;} is used. If \code{next=tail.next} is
the checked assignment, we start $\Gblock$ with edges containing \code{tail=Tail}
and \code{next=tail.next}.

The finalization phase, again, cannot rely on shared variables. However, here,
we preserve the return statements to terminate the execution: \begin{align*}
\Gfinal\ :=\ \havocof{G}(\vcassuc, \vfinal)\ . \end{align*}


\cref{Figure:Phase1PopTreiber} illustrates the construction on the \code{pop}
method in Treiber's stack. Instead of a CFG, we give the source code. \code{STOP}
represents deleted edges and the fact that we cannot move from one phase to
another not using the new edges.


\begin{figure}[t]
\center
\begin{minipage}[t]{.96\textwidth}
\begin{minipage}[t]{.31\textwidth}
	\begin{lstlisting}[style=condensed]
// Initialization		
while (true) {
	if (*) goto L1;
	Node top=*;
	if (top==NULL){
		STOP;
	}
	Node next=top.next;
	if(CAS(*,top,next)){
		out=top.data;
		STOP;
	}
}
STOP;
	\end{lstlisting}
\end{minipage}
\hfill
\begin{minipage}[t]{.33\textwidth}
	\begin{lstlisting}[style=condensed]
// CAS block
L1:Node top=ToS; 
goto L2;
while (true) {
	STOP;
	L2:if(top==NULL){
		STOP;
	}
	Node next=top.next;
	if(*) goto L3;
	if(CAS(ToS,top,next)){
		out=top.data;
		STOP;
	}
}
STOP;
L3:if(CAS(ToS,top,next)) 
   goto L4;
	\end{lstlisting}
\end{minipage}
\hfill
\begin{minipage}[t]{.30\textwidth}
	\begin{lstlisting}[style=condensed]
// Finalization
while (true) {
	Node top=*;
	if(top==NULL){
		return;
	}
	Node next=top.next;
	if(CAS(*,top,next)){
		L4:out=top.data;
		return;
	}
}
	\end{lstlisting}
\end{minipage}
\end{minipage}
\vspace{-6mm}
\caption{Step 1 in the summary computation for the \code{pop} method in Treiber's stack.}
\label{Figure:Phase1PopTreiber}
\vspace{-1mm}
\end{figure}


\paragraph{Step 2: Cleaning-up and summary generation.}
We perform \emph{copy propagation} using a must analysis that propagates an
assignment \code{y=x} to subsequent assignments \code{z=y}, resulting in \code{z=x}. 
That it is a must analysis means the propagation is done only if \code{z=y} definitely has to
use the value of \code{y} that stems from the assignment \code{y=x}. Moreover, we perform
the copy propagation assuming that the entire summary executes atomically.
For the initialization phase, the result is that the non-deterministic values for shared
variables propagate through the code. Similarly, for the CAS block, the shared
variables themselves propagate through the code. For the finalization phase,
non-deterministic values propagate only in the case when a local variable does not receive
its value from the CAS block. As a result, after the copy propagation, the CAS
and the finalization block may contain conditionals that are constantly true or
constantly false. We replace those that evaluate to true by skip and remove the
edges that evaluate to false. The result of the copy propagation is illustrated
in \cref{Figure:CopyPropagationPopTreiber}.


\begin{figure}[t]
\vspace{-1mm}
\center
\begin{minipage}[t]{.96\textwidth}
\begin{minipage}[t]{.27\textwidth}
	\begin{lstlisting}[style=condensed]
// Initialization		
while (true) {
	if (*) goto L1;
	Node top=*;
	if(*){
		STOP;
	}
	Node next=*;
	if(*){
		out=*;
		STOP;
	}
}
STOP;
	\end{lstlisting}
\end{minipage}
\hfill
\begin{minipage}[t]{.38\textwidth}
	\begin{lstlisting}[style=condensed]
// CAS block
L1:Node top=ToS; 
   goto L2;
while (true) {
	STOP;
	L2:if(ToS==NULL){
		STOP;
	}
	Node next=ToS.next;
	if(*) goto L3;
	ToS=ToS.next;
	out=ToS.data;
	STOP;
}
STOP;
L3:if(CAS(ToS,ToS,ToS.next)) 
   goto L4;
	\end{lstlisting}
\end{minipage}
\hfill
\begin{minipage}[t]{.28\textwidth}
	\begin{lstlisting}[style=condensed]
// Finalization
while (true) {
	Node top=*;
	if(*){
		return;
	}
	Node next=*;
	if(*){
		L4:out=ToS.data;
		return;
	}
}
	\end{lstlisting}
\end{minipage}
\end{minipage}
\vspace{-6mm}
\caption{Copy propagation within the summary computation for \code{pop} in Treiber's stack.}
\label{Figure:CopyPropagationPopTreiber}
 \vspace{-4mm}
\end{figure}

Subsequently, we perform a \emph{live variables analysis}. A variable is live if
it may occur in a subsequent conditional or on the right-hand side of a
subsequent assignment.
Otherwise, it is dead.
We remove all assignments to dead variables including output parameters.
In our running example, all assignments to local variables as well as to the
output parameter can be removed. 

Next, we remove code that is \emph{unreachable}, \emph{dead}, or \emph{useless}.
Unreachable code can appear due to modifications of the CFG. Dead code does not
lead to the final location. Useless code does not have any impact on the values
of the variables used, which can concern even (possibly infinite) useless loops.

Finally, in the resulting code, conditionals are replaced by \code{assume} statements,
and the entire code is wrapped into an atomic block. For the pop method in
Treiber's stack, we get the summary given in \cref{code:treibers}.


\section{Generalization to Explicit Memory Management}
\label{sec:generalization-to-explicit-memory-management}


We now generalize our approach to explicit memory management.
The problem is that the separation between the shared and owned heap is
difficult to define and establish in this case. Ownership as understood in
garbage collection, where no other thread can access a cell that was
allocated by another thread but not made shared, does not exist any more.
Memory can be freed and \emph{reallocated}, with other threads still holding
(dangling) pointers to it. These threads can read and modify that memory, hence
the allocating thread does not have strong guarantees of exclusivity. However,
programmers usually try to prevent effects of accidental reallocations: threads
are designed to \emph{respect~ownership}. That is, a thread should be allowed to
execute \emph{as if} it had exclusive access to the~memory~it~owns.

Our development is parameterized by a notion of separation between the shared and owned heap.
To generalize the results,  we provide a new notion of ownership suitable for explicit memory management.
However, the new notion is not guaranteed to be preserved by the semantics.
Instead, we include into our fixed-point computation a check that the program respects this ownership, and give up the analysis if the check fails.

%
%
To understand how the heap separation is influenced by basic pointer manipulations, we consider the following set of commands $\EBNFprimitive$:
\begin{align*}
	x = \cmdmalloc,\;
	x = \cmdfree,\;
	x = y,\;
	x = y\selsel_i,\;
	x\selsel_i = y\ .
\end{align*}
Here, $x, y$ are program variables and $\ptrsel_0,\dots,\ptrsel_n$ are selectors, from which the first, say $m$, are \emph{pointer selectors} and the rest are \emph{data selectors}.
Command $x = \cmdmalloc$ allocates a \emph{record}, a free block of addresses $a+0,\ldots,a+n$, and sets $h(x)$ to $a$.
Command $x = \cmdfree$ frees the record $h(x)+0,\ldots,h(x)+n$.
Selectors correspond to field accesses: $x\selsel_i$ refers to the content of $a+i$ if $x$ points to $a$.
The remaining commands have the expected meaning.


\vspace{-2mm}
\subsection{Heap Separation}
\vspace{-1mm}

%
%
We work with a three-way partitioning of the heap into \emph{shared}, \emph{owned}, and \emph{free} addresses.
Free are all addresses that are fresh or have been freed and not reallocated.
Shared is every address that is \emph{reachable} from the shared variables and not free.
The reachability predicate, however, requires care.
First, we must obviously generalize reachability from the first memory cell of a record to the whole record. 
Second, we must not use \emph{undefined} pointers for reachability.
A pointer is undefined if it was propagated from uninitialized or uncontrolled memory.
Letting the shared heap propagate through such values would make it possible for the entire allocated heap to be shared (since undefined pointers can have an arbitrary value).
Then, owned is the memory which is not shared nor free.
The owned memory is partitioned into disjoint blocks that are \emph{owned by individual threads}.
A thread gains ownership by moving memory into the owned part, and loses it when the memory is removed from the owned part.
The actions by which a thread can gain ownership are
\begin{inparaenum}[(1)]
 	\item allocation and
 	\item breaking reachability from shared variables by an update of a pointer or a shared variable (\emph{ownership transfer}).
\end{inparaenum}
An \emph{ownership violation} is then a modification of a thread's owned memory by another thread.
This can in particular be
\begin{inparaenum}[(1)]
	\item freeing or publishing the owned memory or
	\item an update of a~pointer therein.
\end{inparaenum}
A program respects ownership if it cannot reach an ownership violation.

%
%
Let us discuss these concepts formally.
We use $\bot$ to identify free cells.
That is, in a heap $h$ address $a$ is free if $h(a)=\bot$ (also written $a\mkern-2mu\notin\mkern-3mu\dom{h}$).
A record is free if so are all its cells.
Consequently, the \code{free} command sets all cells of a record to~$\bot$.
The shared heap is identified by reachability through defined pointers starting from the shared variables. 
For undefined pointers we use the symbolic value $\udef$. 
Initially, all variables are undefined.
Moreover, we let allocations initialize the selectors of records to $\udef$.
We use a value distinct from $\bot$ to detect ownership violations by checking whether $\bot$ is reachable from the shared heap (see below). 
Value $\udef$ is explicitly allowed to be reachable (this may be needed for list implementations where selectors of sentinel nodes are not initialized).
Let $\Ptrshared$ be the shared pointer variables.
Then, the addresses of the shared records in a heap $h$, denoted by $\Recshared{h}\subseteq\Nat\cup\set{\bot}$, are collected by the following fixed point (where the \emph{address of a record} is its lowest address):
\begin{align*}
	S_0 &= \setcond{a}{\exists x\in\Ptrshared.\, h(x)=a\neq\udef}
	\\
	S_{i+1} &= \setcond{b}{\exists\, a{\,\in\,} S_i\: \exists\, k.~ a\neq\bot \:\wedge\: 0\leq k < m \:\wedge\: h(a+k) =b\neq\udef}
\end{align*}
All addresses within the shared records are then shared.
The remaining cells, \ie, those that are neither free nor shared, are owned.
This definition establishes a sufficient separation for \cref{assumption:sequential-separation}.
It is automatically lifted to the concurrent setting by Rule~(\textsc{par}) following the intuition from above.

%
%
It remains to detect ownership violations, which occur whenever a thread modifies cells owned by other threads.
Due to the separation integrated into Rule~(\textsc{par}), threads execute with only the shared heap and their owned heap being visible.
The remainder of the heap is cut away.
By choice of $\bot$ to identify free cells, the cut away part appears free to the acting thread.
In particular, the parts owned by other threads appear free.
Hence, in order to avoid ownership violations, a thread must not modify free cells.
To that end, an ownership violation occurs if
\begin{inparaenum}[(A)]
	\item
		\label{new:generalization:ovio:free}
		a free cell is freed again,
	\item
		\label{new:generalization:ovio:write}
		a free cell is written to, or
	\item
		\label{new:generalization:ovio:publish}
		a free cell is published to the shared heap.
\end{inparaenum}
For \eqref{new:generalization:ovio:free} and \eqref{new:generalization:ovio:write} we extend the semantics of commands to raise an ownership violation if a free cell is manipulated.
For \eqref{new:generalization:ovio:publish} we analyse all program steps of threads and check whether the step results in a shared heap where $\bot$ is made reachable.

Formally, we have the following rules.
\begin{gather*}
 	\rname{\ref{new:generalization:ovio:free}}
	\raxiom{
		\exists\ptrsel.\, (s\uplus o)(x)\selsel\notin\dom{s\uplus o}
	}
	\rconclusion{1}{
		\state{s}{\conf{x=\cmdfree}{o}}
		\stepprog
		\mathit{violation}
	}
	\DisplayProof
	\qquad
	%
 	\rname{\ref{new:generalization:ovio:write}}
	\raxiom{
		(s\uplus o)(x)\selsel\notin\dom{s\uplus o}
	}
	\rconclusion{1}{
		\state{s}{\conf{x\selsel=y}{o}}
		\stepprog
		\mathit{violation}
	}
	\DisplayProof
	\\[2mm]
	%
 	\rname{\ref{new:generalization:ovio:publish}}
	\raxiom{
		\state{s}{\cf}\stepprog\state{s'}{\cf'}
		\qquad
		\bot\in\Recshared{s'}
	}
	\rconclusion{1}{
		\state{s}{\cf}
		\stepprog
		\mathit{violation}
	}
	\DisplayProof
\end{gather*}
Note that reading out free cells is allowed by the above rules.
This is necessary because lock-free algorithms typically perform speculating reads and check only later whether the result of the read is safe to use.
Moreover, note that our detection of ownership violations can yield false-positives.
A cell may not be owned, yet an ownership violation is raised because it appears free to the thread.
We argue that such false-positives are \emph{desired} as they access truly free memory.
Put differently: an ownership violation detected by the above rules is either indeed an ownership violation or an unsafe access of free memory, that is, a bug.


\vspace{-2mm}
\subsection{Ownership Transfer}
\vspace{-1mm}

%
%
The above separation is different from the one used under garbage collection in the earlier sections. 
When an address becomes unreachable from the shared variables, 
it is \emph{transferred} into the acting thread's owned heap 
(although other threads may still have pointers to it).
We introduce this ownership transfer to simplify the construction of summaries.
The idea is best understood on an example. 

%
%
Under explicit memory management, threads free cells that they made unreachable from the shared variables to avoid memory leaks. 
Consider, for example, the method \texttt{pop} in Treiber's stack (\cref{code:treibers}). 
There, a thread updates the \texttt{ToS} variable making the former top of stack, say $a$, unreachable from the shared heap. 
In the version for explicit memory management, $a$ is then freed before returning. 
If ownership was not transferred and address $a$ stayed shared, then two summaries would be needed: one for the update of \texttt{ToS} and one for freeing~$a$. 
However, a stateless version of the latter summary could not learn which address to free since it starts with the empty local heap and with $a$ unreachable from the shared heap. 
If, on the other hand, ownership of $a$ is transferred to the acting thread, then the former summary can include freeing $a$ (which does not change the shared heap).  
Moreover, it is even forced to free $a$ in order to remain stateless since $a$ would otherwise persist in its owned heap.\\[0.2cm]

%
%
\vspace{-6mm}
We stress that our framework can be instantiated with other notions of separation,
like an analogue of the one for garbage collection or the one of \cite{DBLP:conf/vmcai/HazizaHMW16}, 
which both do not have ownership transfer. 
This would complicate the reasoning in \cref{sec:effect-summaries},
but could lead to a more robust analysis (ownership transfer is prone to ownership~violations).


\vspace{-2mm}
\subsection{ABA Prevention}
\vspace{-1mm}

%
%
Additional synchronization mechanisms can be incorporated into our approach.
For instance, lock-free data structures may use \emph{version counters} to prevent the \emph{ABA problem}~\cite{DBLP:journals/jpdc/MichaelS98}: 
a variable leaves and returns to the same address, and an observer incorrectly concludes that the variable has never changed.
A well-known scenario of this type causes stack corruption in a naive extension of Treiber's stack to explicit memory management~\cite{DBLP:journals/jpdc/MichaelS98}.
To give the observer a means of detecting that a variable has been changed, pointers are associated with a counter that increases with every update.

%
%
In our analysis, such version counters must be persistent in the shared memory.
Since this is an exception from the above definition of separation, a presence of version counters must be indicated by the user 
(e.g., the user specifies that the version counter of a pointer $a$ is always stored at address $a+1$).
The semantics is then adapted in such a way that
\begin{inparaenum}[(1)]
 	\item version counters remain in the shared heap upon freeing,
 	\item are retained in case of reallocations, and
 	\item are never transferred to a thread's owned heap.
\end{inparaenum}
The modifications can be easily implemented, and are detailed in Appendix~\ref{app:version_counters}.
Last, the thread-modular abstraction has to be adjusted since keeping all counters ever allocated in every thread view is not feasible. 
One solution is to remember only the values of those counters that are attached to the allocated shared and the thread's own heap.
%


\section{Experiments and Discussion}
\label{sec:experiments}

To substantiate our claim for practical benefits of the proposed method, we implemented
the techniques from Sections~\cref{sec:effect-summaries}.\footnote{Available at: \url{\githuburl}}
Therefore, we modified our previous linearizability checker \cite{DBLP:conf/vmcai/HazizaHMW16} to perform our novel fixed-point computation.
The modifications were straightforward and implemented along the lines of \cref{sec:effect-summaries}.


\begin{table}[t]
	\vspace{-1mm}
	\newcommand{\cell}[2]{%
		\begin{minipage}{1cm}
			\hfill#1
		\end{minipage} / 
		\begin{minipage}{1cm}
			#2
		\end{minipage}}
	\def\firstcolwidth{4.2cm}%
	\setlength\extrarowheight{1pt}%
	\newcolumntype{Y}{>{\centering\arraybackslash}X}%
	\newcolumntype{Z}{>{\raggedright}m}%
	\begin{center}
		\begin{tabularx}{1.0\textwidth}{Z{\firstcolwidth+.1cm}l*{2}{Y}}
			\multicolumn{2}{l}{Program}
				& Thread-Modular \cite{DBLP:conf/vmcai/HazizaHMW16}
				& Thread Summaries
				\\
			\hline\hline
			Coarse stack
				& GC & \cell{0.29s}{343} & \cell{0.03s}{256} \\
				& MM & \cell{1.89s}{1287} & \cell{0.19s}{1470} \\
			\hline
			Coarse queue
				& GC & \cell{0.49}{343} & \cell{0.05s}{256} \\
				& MM & \cell{2.34s}{1059} & \cell{0.98s}{2843} \\
			\hline
			Treiber's stack \cite{opac-b1015261}
				& GC & \cell{1.99s}{651}  & \cell{0.06s}{458} \\
				& MM & \cell{25.5s}{3175} & \cell{1.64s}{2926} \\
			\hline
			Michael\&Scott's queue \cite{DBLP:journals/jpdc/MichaelS98}
				& GC & \cell{11.0s}{1530} & \cell{0.39s}{1552} \\
				& MM & \cell{11700s}{19742}  & \cell{102s}{27087} \\
			\hline
			DGLM queue \cite{DBLP:conf/forte/DohertyGLM04}
				& GC & \cell{9.56s}{1537}  & \cell{0.37s}{1559} \\
				& MM & unsafe (spurious)  & violation \\
			\hline
		\end{tabularx}
	\end{center}
	\caption{Experimental results: a speed-up of up to two orders of magnitude.}%
	\label{table:experiments}%
	\vspace{-6mm}
\end{table}

Our findings are listed in \cref{table:experiments}.
Experiments were conducted on an Intel Xeon E5-2670 running at 2.60GHz.
The table includes running times (averaged over ten runs) and the number of explored views (size of the computed fixed point).
Our benchmarks include well-known data structures such as Treiber's lock-free stack \cite{opac-b1015261},  Michael\&Scott's lock-free queue \cite{DBLP:journals/jpdc/MichaelS98}, and the lock-free DGLM queue~\cite{DBLP:conf/forte/DohertyGLM04}.
We do not include lock-free set implementations due to limitations of the tool
in handling data---not due to limitations of our approach.
We ran~each benchmark under garbage collection~(GC) and explicit memory management~(MM).
Additionally, we include for each benchmark a comparison between our novel fixed point using summaries and the optimized version of the classical thread-modular fixed point from \cite{DBLP:conf/vmcai/HazizaHMW16}.

Our experiments show that relying on summaries provides a significant performance boost compared to classical interference.
This holds true for both garbage collection and explicit memory management.
For garbage collection, we experience a speed-up of one order of magnitude throughout the entire test suite.
Although comparisons among different implementations are inherently unfair, we note that our tool compares favorably to competitors \cite{DBLP:conf/tacas/AbdullaHHJR13,DBLP:conf/sas/AbdullaJT16,DBLP:conf/cav/Vafeiadis10,DBLP:conf/vmcai/Vafeiadis10}.
Under explicit memory management, the same speed-up is present for simple algorithms, like Treiber's stack.
For slightly more complex implementations, like Michael\&Scott's queue, we observe a more eminent speed-up of over two orders of magnitude.
This speed-up is present even though the analysis explores a~way larger search space than its classical counterpart.
This confirms that our approach of reducing the complexity of interference steps rather than reducing the search space is beneficial for verification.

Unfortunately, we could not establish correctness of the DGLM queue under explicit memory management with neither of the fixed points.
For the classical one, the reason was imprecision in the underlying shape analysis which resulted in spurious unsafe memory accesses.
For our novel fixed point, the tool detected an ownership violation according to \cref{sec:generalization-to-explicit-memory-management}.
While being correct, the DGLM queue indeed features such a violation.
The update pattern in the \texttt{deque} method can result in freeing nodes that were made unreachable by other threads.
The problematic scenario only occurs when the head of the queue overtakes the tail.
Despite the similarity, this behavior is not present in Michael\&Scott's queue which is why it does not suffer from such a violation.

As hinted in \cref{sec:generalization-to-explicit-memory-management}, one could
generalize our theory in such a way that no ownership transfer is required.
Without ownership transfer, however, freeing cells becomes an effect of the
shared heap which cannot be mimicked: a stateless summary cannot acquire a
pointer to an unreachable cell and thus not mimic the free. Consequently, one
has to relax 
the assumption of statelessness. 
This inflicts major changes on the fixed point from \cref{sec:effect-summaries}.
Besides program threads, it would need to include threads executing stateful
summaries. Moreover, one would need to reintroduce interference steps. However,
we argue that only such interference steps are required where stateful summaries
appear as the interfering thread.  Hence, the number of interference steps is expected
to be significantly lower than for ordinary interference.
We consider a proper investigation of these issues an interesting subject for
future~work.


\vspace{-3mm}
\section{Related Work}
\label{sec:related_work}
\vspace{-2mm}

We already commented on the two approaches of computing interference steps.
The merge-and-project approach
\cite{DBLP:conf/cav/BerdineLMRS08,DBLP:conf/spin/FlanaganQ03,DBLP:journals/toplas/Jones83,DBLP:conf/ictac/MalkisPR06}
suffers form low scalability and precision due to computing too many merge-compatible heaps.
To improve precision of interference, 
works like \cite{DBLP:conf/pldi/GotsmanBCS07,DBLP:conf/aplas/SegalovLMGS09,DBLP:conf/vmcai/Vafeiadis10} track additional thread correlations; ownership, for instance.
However, keeping more information within thread states usually has a negative impact on scalability.
Moreover, for the programs of our interest, those techniques were not applicable in the case of explicitly managed memory
which does not provide exclusivity guarantees. 
%
Instead, \cite{DBLP:conf/cav/BerdineLMRS08,DBLP:conf/tacas/AbdullaHHJR13}
proposed to maintain views of two threads, allowing one to infer the context in which the views occur.
Since this again jeopardizes scalability, \cite{DBLP:conf/vmcai/HazizaHMW16} tailored ownership towards explicit memory management.
Still, computing interference remained quadratic in the size of the fixed point.
%
Our approach improves dramatically on the efficiency of \cite{DBLP:conf/vmcai/HazizaHMW16} while keeping its precision. 

%

The learning approach in \cite{DBLP:conf/vmcai/Vafeiadis09,DBLP:conf/vmcai/Vafeiadis10,DBLP:conf/concur/VafeiadisP07} and \cite{DBLP:conf/esop/Mine11,DBLP:conf/vmcai/Mine14,DBLP:conf/vmcai/MonatM17} performs a variant of rely/guarantee reasoning \cite{DBLP:conf/ifip/Jones83} paired with symbolic execution and abstract interpretation, respectively.
In a fixed point, the interference produced by a thread is recorded and applied to other threads in consecutive iterations.
This computes a symbolic representation of the inteference which is as precise as the underlying abstract domain   
(although the precision may be relaxed by further abstraction and hand-crafted joins).
Our method improves on this in various aspects.
First, we never compute the most precise interference information.
Our summaries can be understood as a form of interpolant between the most precise approximation and the complement of the bad states.
Second, our summaries are syntactic objects (program code) which are independent of the actual verification procedure and thus reusable.
The learned interference may be reused only in the same abstract domain it was computed in.
Third, we show how to lift our approach to explicit memory management what has not been done before.
Fourth, our results are independent of the actual program semantics relying only on a small core language.
Our development required to formulate the principles that libraries rely on (statelessness) which have not been made explicit elsewhere.

Another approach to make the verification of low-level implementations tractable is atomicity abstraction \cite{DBLP:conf/lics/AbadiL88,DBLP:conf/popl/ElmasQT09,DBLP:conf/tacas/ElmasQSST10,DBLP:journals/fac/Jonsson12,DBLP:conf/ecoop/PintoDG14,DBLP:conf/fmcad/PopeeaRW14}.
The core idea is to translate a given program into its specification by introducing and enlarging atomic blocks. The code transformations must be provably sound, with the soundness arguments oftentimes crafted for a particular semantics only.
While generating summaries is closely related to making the program under scrutiny more atomic, we pursue a different approach.
Our rewriting rules (\ie the computation of summaries) do not need to be, and indeed are not, provably sound, which allows for much more freedom.
Nevertheless, we guarantee a sound analysis.
Our sanity checks can be understood as an efficient, fully automatic procedure to check whether or not the applied atomicity abstraction was sound.
Additionally, we do not rely on a particular memory semantics.

Simulation relations are widely used for linearizability proofs \cite{DBLP:conf/forte/DohertyGLM04,DBLP:conf/tacas/ElmasQSST10,DBLP:journals/tocl/SchellhornDW14,DBLP:conf/ssiri/ZhangL10} and verified compilation \cite{DBLP:journals/toplas/JagannathanLPPV14,DBLP:journals/jar/Leroy09}.
There, one establishes a simulation relation between a low-level program and a high-level program stating that the latter preserves the behaviors of the former.
Verifying properties of the low-level program then reduces to verifying the same property for the high-level program.
Establishing simulation relations, however, suffers from the same shortcomings as atomicity~abstraction.

Finally, \cite{DBLP:conf/esop/GotsmanRY13} introduces \emph{grace periods}, an idiom similar to CAS blocks.
It reflects the protocol used by a program to prohibit data corruption.
During a grace period, it is guaranteed that a thread's memory is not freed.
However, no method for checking conformance to such periods is given.
That is, soundness cannot be checked when relying on grace periods whereas our sanity checks can efficiently detect unsound~verification~results.


\vspace{-3mm}
\section{Conclusion}
\label{sec:conclusion}
\vspace{-2mm}

We proposed a new approach for verifying lock-free data structures.
The approach builds on the so-called CAS blocks (or, more generally,
copy-and-check code blocks) which are commonly used when implementing lock-free
data structures. We proposed a heuristic that builds stateless program summaries
from such blocks. By avoiding many expensive merge-and-project operations, the
approach can greatly increase the efficiency of thread-modular verification. This
was confirmed by our experimental results showing that the implementation of
our approach compares favorably with other competing tools. Moreover, our
approach naturally combines with recently proposed reasoning about ownership to
improve the precision of thread-modular reasoning, which allowed us to handle complex
lock-free code efficiently even under explicit memory management. Of course, our
heuristically computed stateless summaries can miss some reachable shared heaps,
but, as a major part of our contribution, we proved that one can check whether this
is the case on the generated state space. Hence, we can perform sound
verification using a potentially unsound abstraction.

In the future, we would like to investigate CEGAR to include missing effects into our summaries.
The main question here is how to refine the program code of a summary using an abstract representation of the missing effects. 
Further, it may be necessary to introduce stateful summaries in order to include certain effects, as revealed by the DGLM queue under explicit memory management.
Moreover, in theory, our approach could increase not only efficiency but also
precision compared with other approaches. This is due to the atomicity of the
CAS blocks that could rule out interleavings that other approaches would
explore.
We have not found this confirmed in our experiments.
Nevertheless, we find it worth investigating the theoretical and practical aspects of this matter in the future.

	\bibliographystyle{splncs03}
	\bibliography{bib_short}
	

\appendix


\section{Data Structure Implementations and Summaries}

To complement our benchmarks from \cref{sec:experiments} we provide the considered data structure implementations and the corresponding summaries.
In the code, we mark with bold font the extensions needed for explicit memory management, but not for garbage collection.
For explicit memory management we use version counters to avoid the ABA problem.

\Cref{appendix:examples:common} gives an overview of structures and functions common to all algorithms.
The coarse stack can be found in \cref{appendix:examples:coarsestack:code}, its summary in \cref{appendix:examples:coarsestack:summary}.
The coarse queue can be found in \cref{appendix:examples:coarsequeue:code}, its summary in \cref{appendix:examples:coarsequeue:summary}.
Treiber's stack can be found in \cref{appendix:examples:treiber:code}, its summary in \cref{appendix:examples:treiber:summary}.
Michael\&Scotts's queue can be found in \cref{appendix:examples:ms:code}, its summary in \cref{appendix:examples:ms:summary}.
The DGLM queue can be found in \cref{appendix:examples:dglm:code}, its summary in \cref{appendix:examples:dglm:summary}.
Note that the DGLM queue does not have a stateless summary under explicit memory management as discussed in \cref{sec:experiments}.


\begin{figure}
	\center
	\begin{minipage}{.92\textwidth}
		\begin{minipage}[t]{.42\textwidth}
			\begin{lstlisting}[style=condensed]
struct ptr_t {
	@uint  age;@
	Node* ptr;
}

struct Node {
	data_t data;
	ptr_t  next;
}
				\end{lstlisting}
			\end{minipage}
			\hfill
			\begin{minipage}[t]{.5\textwidth}
				\begin{lstlisting}[style=condensed]
ptr_t create(data_t data) {
	ptr_t res;
	res.ptr = new Node();
	res.ptr->data = data;
	res.ptr->next.ptr = NULL;
	@// version counters are not
	// initialized intentionally
	@return res;
}
			\end{lstlisting}
		\end{minipage}
		\vspace{-3mm}
	\end{minipage}
	\begin{minipage}{.92\textwidth}
		\begin{lstlisting}[style=condensed]
atomic boolean CAS(ptr_t& dst, ptr_t cmp, ptr_t swp) {
	if (@dst.age == cmp.age &&@ dst.ptr == cmp.ptr) {
		@dst.age = cmp.age + 1;@
		dst.ptr = swp.ptr;
		return true;
	} else {
		return false;
	}
}
		\end{lstlisting}
	\end{minipage}
	\vspace{-3mm}
	\caption{Common functionality.}
	\label{appendix:examples:common}
	\vspace{-7mm}
\end{figure}


\begin{figure}
	\begin{minipage}{.92\textwidth}
		\center
		\begin{minipage}[t]{.52\textwidth}
			\begin{lstlisting}[style=condensed]
shared ptr_t ToS;

void init() {
	ToS.ptr = NULL;
}

void push(data_t in) {
	ptr_t node = create(in);
	Atomic {
		node.ptr->next = ToS;
		ToS = node;
	}
	return;
}
				\end{lstlisting}
			\end{minipage}
			\hfill
			\begin{minipage}[t]{.4\textwidth}
				\begin{lstlisting}[style=condensed]
bool pop(data_t& out) {
	Atomic {
		if (ToS.ptr == NULL) {
			return false;
		} else {
			out = ToS.ptr->data;
			@ptr_t top = ToS;@
			ToS = ToS.ptr->next;
			@free(top.ptr);@
			return true;
		}
	}
}
			\end{lstlisting}
		\end{minipage}
	\end{minipage}
	\vspace{-4mm}
	\caption{Coarse stack.}
	\label{appendix:examples:coarsestack:code}
\end{figure}

\begin{figure}
	\begin{minipage}{.92\textwidth}
		\center
		\begin{minipage}[t]{.52\textwidth}
			\begin{lstlisting}[style=condensed]
S1: atomic {
	/* push */
	ptr_t node = create(*);
	node.ptr->next = ToS;
	ToS = node;
}

S2: atomic { /* skip */ }
				\end{lstlisting}
			\end{minipage}
			\hfill
			\begin{minipage}[t]{.4\textwidth}
				\begin{lstlisting}[style=condensed]
S3: atomic {
	/* pop */
	assume(ToS.ptr != NULL);
	@ptr_t old = ToS;@
	ToS.ptr = ToS.ptr->next.ptr;
	@free(old.ptr);@
}
			\end{lstlisting}
		\end{minipage}
	\end{minipage}
	\vspace{-4mm}
	\caption{Summaries for coarse stack.}
	\label{appendix:examples:coarsestack:summary}
\end{figure}


\begin{figure}
	\begin{minipage}{.92\textwidth}
		\center
		\begin{minipage}[t]{.52\textwidth}
			\begin{lstlisting}[style=condensed]
shared ptr_t Head, Tail;

void init() {
	Head = create(*);
	Tail = Head;
}

void enq(data_t in) {
	ptr_t node = create(in);
	Atomic {
		Tail.ptr->next = node;
		Tail = node;
	}
	return;
}
				\end{lstlisting}
			\end{minipage}
			\hfill
			\begin{minipage}[t]{.4\textwidth}
				\begin{lstlisting}[style=condensed]
bool pop(data_t& out) {
	Atomic {
		if (Head.ptr->next == NULL) {
			return false;
		} else {
			out = Head.ptr->next.ptr->data;
			@ptr_t head = Head;@
			Head = Head.ptr->next;
			@free(head.ptr);@
			return true;
		}
	}
}
			\end{lstlisting}
		\end{minipage}
	\end{minipage}
	\vspace{-6mm}
	\caption{Coarse queue.}
	\label{appendix:examples:coarsequeue:code}
\end{figure}

\begin{figure}
	\begin{minipage}{.92\textwidth}
		\center
		\begin{minipage}[t]{.52\textwidth}
			\begin{lstlisting}[style=condensed]
S1: atomic {
	/* push */
	ptr_t node = create(*);
	Tail.ptr->next = node;
	Tail = node;
}

S2: atomic { /* skip */ }
				\end{lstlisting}
			\end{minipage}
			\hfill
			\begin{minipage}[t]{.4\textwidth}
				\begin{lstlisting}[style=condensed]
S3: atomic {
	/* pop */
	assume(Head.ptr->next != NULL);
	@ptr_t old = Head;@
	Head = Head.ptr->next;
	@free(old.ptr);@
}
			\end{lstlisting}
		\end{minipage}
	\end{minipage}
	\vspace{-4mm}
	\caption{Summaries for coarse queue.}
	\label{appendix:examples:coarsequeue:summary}
\end{figure}


\begin{figure}
	\begin{minipage}{.92\textwidth}
		\center
		\begin{minipage}[t]{.52\textwidth}
			\begin{lstlisting}[style=condensed]
shared ptr_t ToS;

void init() {
	ToS.ptr = NULL;
}

void push(data_t in) {
	ptr_t node = create(in);
	while (true) {
		ptr_t top = ToS;
		node.ptr->next = top;
		if(CAS(ToS, top, node)){
			return;
		}
	}
}
				\end{lstlisting}
			\end{minipage}
			\hfill
			\begin{minipage}[t]{.4\textwidth}
				\begin{lstlisting}[style=condensed]
bool pop(data_t& out) {
	while (true) {
		ptr_t top = ToS;
		if(top.ptr == NULL){
			return false;
		}
		ptr_t next = top.ptr->next;
		if(CAS(ToS, top, next)){
			out = top.ptr->data;
			@free(top.ptr);@
			return true;
		}
	}
}
			\end{lstlisting}
		\end{minipage}
	\end{minipage}
	\vspace{-6mm}
	\caption{Treiber's lock-free stack \cite{opac-b1015261}.}
	\label{appendix:examples:treiber:code}
\end{figure}

\begin{figure}
	\begin{minipage}{.92\textwidth}
		\center
		\begin{minipage}[t]{.52\textwidth}
			\begin{lstlisting}[style=condensed]
S1: atomic {
	/* push */
	ptr_t node = create(*);
	node.ptr->next = ToS;
	@node.age = ToS.age + 1;@
	ToS = node;
}

S2: atomic { /* skip */ }
				\end{lstlisting}
			\end{minipage}
			\hfill
			\begin{minipage}[t]{.4\textwidth}
				\begin{lstlisting}[style=condensed]
S3: atomic {
	/* pop */
	assume(ToS.ptr != NULL);
	@ptr_t old = ToS;@
	ToS.ptr = ToS.ptr->next.ptr;
	@ToS.age++;@
	@free(old.ptr);@
}
			\end{lstlisting}
		\end{minipage}
	\end{minipage}
	\vspace{-4mm}
	\caption{Summaries for Treiber's lock-free stack.}
	\label{appendix:examples:treiber:summary}
\end{figure}


\begin{figure}
	\begin{minipage}{.92\textwidth}
		\center
		\begin{minipage}[t]{.52\textwidth}
			\begin{lstlisting}[style=condensed]
shared ptr_t Head, Tail;


void init() {
	Head = create(*);
	Tail = Head;
}

void enq(data_t in) {
	ptr_t node = create(in);
	while (true) {
		ptr_t tail = Tail;
		ptr_t next = tail.ptr->next;
		@if (tail.age != Tail.age) continue;@
		if (tail.ptr != Tail.ptr) continue;
		if (next.ptr == NULL) {
			if (CAS(tail.next, next, node)) {
				CAS(Tail, tail, next);
				return;
			}
		} else {
			CAS(Tail, tail, next);
		}
	}
}
			\end{lstlisting}
		\end{minipage}
		\hfill
		\begin{minipage}[t]{.4\textwidth}
			\begin{lstlisting}[style=condensed]
bool deq(data_t& out) {
	while (true) {
		ptr_t head = Head;
		ptr_t tail = Tail;
		ptr_t next = head.ptr->next;
		@if (head.age != Head.age) continue;@
		if (head.ptr != Head.ptr) continue;
		if (head.ptr == tail.ptr) {
			if (next.ptr == NULL) return false;
			else CAS(Tail, tail, next);
		} else {
			out = next.ptr->data;
			if (CAS(Head, head, next)) {
				@free(head.ptr);@
				return true;
			}
		}
	}
}
			\end{lstlisting}
		\end{minipage}
	\end{minipage}
	\vspace{-6mm}
	\caption{Michael\&Scott's non-blocking queue \cite{DBLP:journals/jpdc/MichaelS98}.}
	\label{appendix:examples:ms:code}
\end{figure}

\begin{figure}
	\begin{minipage}{.92\textwidth}
		\center
		\begin{minipage}[t]{.52\textwidth}
			\begin{lstlisting}[style=condensed]
S1: atomic {
	/* enq (add new node) */
	assume(Tail.ptr->next.ptr == NULL);
	ptr_t node = create(*);
	Tail.ptr->next.ptr = node.ptr;
	Tail.ptr->next.age++;
	@Tail.age++;@
}

S2: atomic {
	/* enq (swing Tail) */
	assume(Tail.ptr->next.ptr != NULL);
	Tail.ptr = Tail.ptr->next.ptr;
	@Tail.age++;@
}

S3: atomic { /* skip */ }
				\end{lstlisting}
			\end{minipage}
			\hfill
			\begin{minipage}[t]{.4\textwidth}
				\begin{lstlisting}[style=condensed]
S4: atomic {
	/* deq (swing Tail) */
	assume(Head.ptr == Tail.ptr);
	assume(Head.ptr->next.ptr != NULL);
	Tail.ptr = Tail.ptr->next.ptr;
	@Tail.age++;@
}

S5: atomic {
	/* deq */
	assume(Head.ptr != Tail.ptr);
	@ptr_t old = Head;@
	Head.ptr = Head.ptr->next.ptr;
	@Head.age++;@
	@free(old.ptr);@
}
			\end{lstlisting}
		\end{minipage}
	\end{minipage}
	\vspace{-4mm}
	\caption{Summary for Michael\&Scott's non-blocking queue.}
	\label{appendix:examples:ms:summary}
\end{figure}


\begin{figure}
	\begin{minipage}{.92\textwidth}
		\center
		\begin{minipage}{.52\textwidth}
			\begin{lstlisting}[style=condensed]
shared ptr_t Head, Tail;


void init() {
	Head = create(*);
	Tail = Head;
}

void enq(data_t in) {
	ptr_t node = create(in);
	while (true) {
		ptr_t tail = Tail;
		ptr_t next = tail.ptr->next;
		@if (tail.age != Tail.age) continue;@
		if (tail.ptr != Tail.ptr) continue;
		if (next.ptr == NULL) {
			if (CAS(tail.next, next, node)) {
				CAS(Tail, tail, next);
				return;
			}
		} else {
			CAS(Tail, tail, next);
		}
	}
}
			\end{lstlisting}
		\end{minipage}
		\hfill
		\begin{minipage}{.4\textwidth}
			\begin{lstlisting}[style=condensed]
bool deq(data_t& out) {
	while (true) {
		ptr_t head = Head;
		ptr_t next = head.ptr->next;
		@if (head.age != Head.age) continue;@
		if (head.ptr != Head.ptr) continue;
		if (next.ptr == NULL) {
			return false;
		} else {
			out = next.ptr->data;
			if (CAS(Head, head, next)) {
				ptr_t tail = Tail;
				if (head == tail) {
					CAS(Tail, tail, next);
				}
				@free(head.ptr);@
				return true;
			}
		}
	}
}
			\end{lstlisting}
		\end{minipage}
	\end{minipage}
	\vspace{-4mm}
	\caption{DGLM lock-free queue \cite{DBLP:conf/forte/DohertyGLM04}.}
	\label{appendix:examples:dglm:code}
	\vspace{-45mm}
\end{figure}

\begin{figure}
	\begin{minipage}{.92\textwidth}
		\center
		\begin{minipage}[t]{.52\textwidth}
			\begin{lstlisting}[style=condensed]
S1: atomic {
	/* enq (add new node) */
	assume(Tail.ptr->next.ptr == NULL);
	ptr_t node = create(*);
	Tail.ptr->next = node;
}

S2: atomic {
	/* enq (swing Tail) */
	/* deq (swing Tail) */
	assume(Tail.ptr->next.ptr != NULL);
	Tail.ptr = Tail.ptr->next.ptr;
}
				\end{lstlisting}
			\end{minipage}
			\hfill
			\begin{minipage}[t]{.4\textwidth}
				\begin{lstlisting}[style=condensed]
S3: atomic {
	/* deq */
	assume(Head.ptr->next.ptr != NULL);
	Head.ptr = Head.ptr->next.ptr;
}

S4: atomic { /* skip */ }
			\end{lstlisting}
		\end{minipage}
	\end{minipage}
	\vspace{-4mm}
	\caption{Summary for the DGLM lock-free queue under garbage collection. The algorithm has no summary under explicit memory management.}
	\label{appendix:examples:dglm:summary}
\end{figure}


\section{Semantics}

We provide the detailed semantics for the core language from \cref{sec:programming-model}, as well as semantics for garbage collection and explicit memory management satisfying the requirements discussed in \cref{sec:programming-model,sec:generalization-to-explicit-memory-management}.



\subsection{Core Sequential Semantics}

\Cref{sec:programming-model} introduced the core command language used throughout the paper.
Its semantics is made precise in \cref{appendix:fig:sequential-semantics}.

\begin{figure}
	\newcommand{\mkvspace}{\vspace{4mm}}

	\begin{minipage}{.48\textwidth}
		\begin{prooftree}
			\raxiom{
				\state{s}{\conf{\EBNFthread_1}{o}}
				\stepprog
				\state{s'}{\conf{\EBNFthread_1'}{o'}}
			}
			\rconclusion{1}{
				\state{s}{\conf{\EBNFthread_1;\EBNFthread_2}{o}}
				\stepprog
				\state{s'}{\conf{\EBNFthread_1';\EBNFthread_2}{o'}}
			}
		\end{prooftree}
	\end{minipage}
	\hfill
	\begin{minipage}{.48\textwidth}
		\begin{prooftree}
			\raxiom{\phantom{()}}
			\rconclusion{1}{
				\state{s}{\conf{\cmdskip;\EBNFthread}{o}}
				\stepprog
				\state{s}{\conf{\EBNFthread}{o}}
			}
		\end{prooftree}
	\end{minipage}
	\mkvspace

	\begin{minipage}{.48\textwidth}
		\begin{prooftree}
			\raxiom{
				\state{s}{\conf{\EBNFthread_i}{o}}
				\stepprog
				\state{s'}{\conf{\EBNFthread_i'}{o'}}
			}
			\rconclusion{1}{
				\state{s}{\conf{\EBNFthread_1\oplus\EBNFthread_2}{o}}
				\stepprog
				\state{s'}{\conf{\EBNFthread_i'}{o'}}
			}
		\end{prooftree}
	\end{minipage}
	\hfill
	\begin{minipage}{.48\textwidth}
		\begin{prooftree}
			\raxiom{
				\state{s}{\conf{\EBNFthread}{o}}
				\stepprogany
				\state{s'}{\conf{\cmdskip}{o'}}
			}
			\rconclusion{1}{
				\state{s}{\conf{\atomici{\EBNFthread}}{o}}
				\stepprog
				\state{s'}{\conf{\cmdskip}{o'}}
			}
		\end{prooftree}
	\end{minipage}
	\mkvspace

	\begin{minipage}{.48\textwidth}
		\begin{prooftree}
			\rconclusion{0}{
				\state{s}{\conf{\EBNFthread^*\!}{o}}
				\stepprog
				\state{s}{\conf{\EBNFthread;\EBNFthread^*}{o}]}
			}
		\end{prooftree}
	\end{minipage}
	\hfill
	\begin{minipage}{.48\textwidth}
	\begin{prooftree}
		\rconclusion{0}{
			\state{s}{\conf{\EBNFthread^*\!}{o}}
			\stepprog
			\state{s}{\conf{\cmdskip}{o}]}
		}
	\end{prooftree}
	\end{minipage}
	\mkvspace
	\caption{%
		Semantics of the command language core from \cref{sec:programming-model}.
	}
	\label{appendix:fig:sequential-semantics}
\end{figure}


\subsection{Garbage Collection Semantics}

A possible semantics for garbage collection follows.

\paragraph{Commands.}
In order to present how pointer primitives behave under garbage collection, we consider the following instantiation of $\EBNFprimitive$:
\begin{align*}
	\EBNFprimitive
		~::=&~~
		x = \cmdmalloc \EBNFalt
		x = y  \EBNFalt
		x = y\selsel_i \EBNFalt
		x\selsel_i = y
		\ .
\end{align*}
We assume a typed language where non-pointer values cannot be used or stored as pointers.
Therefore, let $\PSel\subseteq\set{\ptrsel_0,\dots,\ptrsel_n}$ be the selectors of pointer type, and $\DSel=\set{\ptrsel_0,\dots,\ptrsel_n}\setminus\PSel$ the ones of non-pointer type.

\subsubsection{Separation.}
The heap is separated into three partitions: free, owned and shared memory.
A discussion of the individual parts is in order.

\paragraph{Free cells, free records.}
We identify free addresses by $\bot$.
That is, a cell located at address $a$ is free in a heap $h$ if $h(a)=\bot$.
Recall that we may equivalently write $a\notin\dom{h}$.
A record is considered free if so are all its cells.
Formally, we use the predicate $\isfree[h]{a}$ to denote that at address $a$ in heap $h$ sufficiently many cells are available for an object:
\begin{align*}
	\isfree[h]{a}
	:\iff
	\set{a,\dots,a+n}\cap\dom{h}=\varnothing
	\ .
\end{align*}
Note that, due to the semantics of Rule~(\textsc{par}) and the usage of $\bot$ here, freed memory and memory owned by other threads is indistinguishable.

\paragraph{Owned cells.}
Ownership is granted upon allocation.
After a record is allocated the allocating thread owns all the record's cells.
Ownership is lost when the record is made reachable through the shared variables.
Once lost, ownership cannot be reclaimed.

\paragraph{Shared cells.}
Shared are all cells that are neither owned nor free.
A cell becomes shared if it is \emph{published}, \ie, if is made reachable from the shared variables.
If a cell is made unreachable from the shared variables it remains shared nevertheless.

Let $\Ptrshared$ be the shared pointer variables.
Then, the records reachable from the shared (pointer) variables in a heap $h$, denoted by $\Recshared{h}$, are those records collected by the following fixed point:
\begin{align*}
	\mathcal{S}_0 &= \setcond{a}{\exists x\in\Ptrshared.\, h(x)=a}
	\\
	\mathcal{S}_{i+1} &= \setcond{b}{\exists a\in \mathcal{S}_i\,\exists\ptrsel\in\PSel.\; h(a\selsel)=b}
	\ .
\end{align*}
Note that we layout the semantics in such a way that $\bot$ is never reachable from the shared variables.
To do so, it suffices to initialize pointer selectors upon allocation to \code{null}.
This is the case because once a thread acquires a pointer to a cell, that cell remains accessible: if the cell is shared it will remain shared, if the cell is owned then it is owned by the thread holding the reference and it may either remain owned or become and remain shared.
(Hence, ownership violations as discussed for explicit memory management in \cref{sec:generalization-to-explicit-memory-management} cannot occur.)


\paragraph{Separating heaps.}
During execution, the boundaries between the shared and owned heap may be adjusted due to, \eg, assignments.
In order to refer to the consequences of such updates, we introduce a function $\readjust{s,o,h'}$ which, for a heap $\soh$, partitioned into a shared part $s$ and an owned part $o$, and an entire updated heap $h'$, provides a partitioning $h'=s'\uplus o'$ into a shared part $s'$ and an owned part $o'$ according to the above intuition of how the shared and owned parts behave.
Formally, $\readjust{s,o,h'}$ sets the domains of $s'$ and $o'$ to
\begin{gather*}
	\dom{s'}=\dom{s}\cup \mathit{Pub}
	\quad\text{and}\quad
	\dom{o'}=\dom{o}\setminus \mathit{Pub}
	\\
	\text{with}\quad
 	\mathit{Pub}:=\setcond{a+0,\dots,a+n}{a\in\Recshared{h'}}\ .
\end{gather*}
This has the effect of moving the published addresses, $\mathit{Pub}$, from the owned heap to the shared heap.
Note that this definition requires $\dom{\soh}=\dom{h'}$.
As we will see later, the only exception to this usage is the transition rule for \code{malloc}.
We will handle this corner case in the transition rule in favor of a more lightweight $\readjust{\cdot}$.

\subsubsection{Semantics.}
Next up are the transition rules for the individual statements of the commands from $\EBNFprimitive$ as instantiated above.

\paragraph{Errors.}
For simplicity we assume that there is a dedicated shared variable $\mathit{fail}$ to indicate execution failures.
We assume that execution is aborted as soon this variable is set.
\begin{prooftree}
	\rname{err}
	\rconclusion{0}{
		\state{s}{\conf{\cmderr}{o}}
		\stepprog
		\state{s[\mathit{fail}\to 1]}{\conf{\cmdskip}{o}}
	}
\end{prooftree}

\paragraph{Allocation.}
A $\cmdmalloc$ allocates new memory for an object.
The memory has to be free prior to the allocation.
The selectors are default-initialized to \code{null}, \ie, $0\in\Nat$.
\begin{prooftree}
	\rname{mallocShared}
	\raxiom{
		\begin{gathered}
			a\in\Nat
			\qquad
			\isfree{a}
			\qquad
			x\in\Ptrshared
			\\
			s'=s[x\to a,a+0\to 0,\dots,a+n\to 0]
		\end{gathered}
	}
	\rconclusion{1}{
		\state{s}{\conf{x=\cmdmalloc}{o}}
		\stepprog
		\state{s'}{\conf{\cmdskip}{o}}
	}
\end{prooftree}
\begin{prooftree}
	\rname{mallocOwned}
	\raxiom{
		\begin{gathered}
			a\in\Nat
			\qquad
			\isfree{a}
			\qquad
			x\notin\Ptrshared
			\\
			o'=s[x\to a,a+0\to 0,\dots,a+n\to 0]
		\end{gathered}
	}
	\rconclusion{1}{
		\state{s}{\conf{x=\cmdmalloc}{o}}
		\stepprog
		\state{s}{\conf{\cmdskip}{o'}}
	}
\end{prooftree}
Note that the allocated memory could be in use by another thread.
Such allocations are prevented in concurrent executions by the semantics of Rule~\textsc{(par)} from \cref{sec:programming-model} as the resulting concurrent state would not be separated.
That is, we do not need care for such \emph{spurious} allocations here in the sequential semantics.

\paragraph{Assignment 1.}
Assignments of the form $x=y$ simply copy the valuation of $y$ to $x$.
The boundary between the shared and owned heap may be readjusted.
\begin{prooftree}
	\rname{assign1}
	\raxiom{\pair{s'}{o'}=\readjust{s,o,\sohup{x\to\sohof{y}}}}
	\rconclusion{1}{
		\state{s}{\conf{x=y}{o}}
		\stepprog
		\state{s'}{\conf{\cmdskip}{o'}}
	}
\end{prooftree}

\paragraph{Assignment 2}
Assignments of the form $x=y\selsel_i$ dereference pointer $y$.
For the assignment to succeed $y$ must be non-null.
As before the shared heap may need readjustment.
\begin{prooftree}
	\rname{assign2-null}
	\raxiom{\sohof{y}=0}
	\rconclusion{1}{
		\state{s}{\conf{x=y\selsel_i}{o}}
		\stepprog
		\state{s}{\conf{\cmderr}{o}}
	}
\end{prooftree}
\begin{prooftree}
	\rname{assign2}
	\raxiom{
		\begin{gathered}
			\sohof{y}=a\neq 0
			\\
			\pair{s'}{o'}=\readjust{s, o, \sohup{x\to\sohof{a\selsel_i}}}
		\end{gathered}
	}
	\rconclusion{1}{
		\state{s}{\conf{x=y\selsel_i}{o}}
		\stepprog
		\state{s'}{\conf{\cmdskip}{o'}}
	}
\end{prooftree}

\paragraph{Assignment 3}
Assignments of the form $x\selsel_i=y$ follow the same rules as the previous assignments.
\begin{prooftree}
	\rname{assign3-null}
	\raxiom{\sohof{x}=0}
	\rconclusion{1}{
		\state{s}{\conf{x\selsel_i=y}{o}}
		\stepprog
		\state{s}{\conf{\cmderr}{o}}
	}
\end{prooftree}
\begin{prooftree}
	\rname{assign3}
	\raxiom{
		\begin{gathered}
			\sohof{x}=a\neq
			\\
			\pair{s'}{o'}=\readjust{s, o, \sohup{a\selsel_i\to\sohof{y}}}
		\end{gathered}
	}
	\rconclusion{1}{
		\state{s}{\conf{x\selsel_i=y}{o}}
		\stepprog
		\state{s'}{\conf{\cmdskip}{o'}}
	}
\end{prooftree}


\subsection{Explicit Memory Management Semantics}

A possible semantics for explicit memory management follows which detects ownership violations as discussed in \cref{sec:generalization-to-explicit-memory-management}.

\paragraph{Commands.}
We consider the following pointer manipulating primitives:
\begin{align*}
	\EBNFprimitive
		~::=&~~
		\cmderr \EBNFalt
		x = \cmdmalloc \EBNFalt
		x = \cmdfree \EBNFalt
		x = y  \EBNFalt
		x = y\selsel_i \EBNFalt
		x\selsel_i = y
		\ .
\end{align*}
As for the garbage collection case, we use pointer selectors, $\PSel$, non-pointer selectors, $\DSel$, and assume that non-pointer types are not cast to pointer types.

\subsubsection{Separation.}
As for garbage collection, the heap consists of free, owned and shared memory.
The separation is detailed in \cref{sec:generalization-to-explicit-memory-management} and thus not repeated here.

\paragraph{Separating heaps.}
To separate a heap into a shared and an owned part we have to compute the records reachable from the shared variables, $\Ptrshared$, following defined pointers.
We use the symbolic value $\udef$ to denote undefined values of any type.
Then, the addresses of the shared records in a heap $h$ are given by the least fixed point to the following equation:
\begin{align*}
	\mathcal{S}_0 &= \setcond{
		a
	}{
		\exists\: x\in\Ptrshared\!.~
		h(x)=a\neq\udef
	}
	\\
	\mathcal{S}_{i+1} &= \mathcal{S}_{i} \cup \setcond{
		b
	}{
		\exists\: a\in\mathcal{S}_{i}~
		\exists \ptrsel\in\PSel.~
		a\neq\bot \,\wedge\,
		h(a\selsel)=b\neq\udef
	}.
\end{align*}
Let $\mathcal{S}_\mathit{lfp}$ be the least fixed point.
Then, the function $\readjust{h}$ returns a pair $\pair{s'}{o'}$ which resembles $h$ split into a shared heap $s'$ and an owned heap $o'$ according to the computed shared records.
Formally, we have $h=s'\uplus o'$ and 
\begin{align*}
	\dom{s'} =~& \dom{h} ~\cap \bigcup_{a\in\mathcal{S}_\mathit{lfp}} \set{a,\dots,a+n}
	\quad\text{and}\\
	\dom{o'} =~& \dom{h}\setminus\dom{s'}.
\end{align*}

\paragraph{Ownership violations.}
Recall from \cref{sec:generalization-to-explicit-memory-management} that ownership violations occur by modifying, freeing, or publishing memory owned by other threads.
We handle the former two violations when dealing with assignments and freeing, respectively.
Publishing, however, can happen whenever the boundary between the shared and owned heaps are adjusted.
In such a case a cell owned by another thread is made reachable from the shared variables.
Since those cells appear freed as discussed above, we can recognize an ownership violation due to publishing by checking whether $\bot$ is contained in $\mathcal{S}_\mathit{lfp}$.
For a concise presentation, we let $\readjust{h}$ return $\mathit{err}$ in such occasions.

To make all possible occurrences  of ownership violations visible, we include their detection in the semantics (unlike presented in \cref{sec:generalization-to-explicit-memory-management} where ownership violations due to publishing where detected on top of program steps).


\subsubsection{Semantics}
The detailed semantics are in order.

\paragraph{Errors.}
As for garbage collection, we assume a dedicated shared variable $\mathit{fail}$ to indicate execution failures.
\begin{prooftree}
	\rname{err}
	\rconclusion{0}{
		\state{s}{\conf{\cmderr}{o}}
		\stepprog
		\state{s[\mathit{fail}\to 1]}{\conf{\cmdskip}{o}}
	}
\end{prooftree}

\paragraph{Allocation.}
A $\cmdmalloc$ allocates new memory for an object.
The memory has to be free prior to the allocation.
The newly allocated memory contains uncontrolled values, \ie, $\udef$.
We have:
\begin{prooftree}
	\rname{malloc}
	\raxiom{
		\begin{gathered}
			a\in\Nat
			\qquad
			\isfree{a}
			\qquad
			\pair{s'}{o'}=\readjust{h'}
			\\
			h'=\sohup{x\to a,a+0\to\udef,\dots,a+n\to\udef}
		\end{gathered}
	}
	\rconclusion{1}{
		\state{s}{\conf{x=\cmdmalloc}{o}}
		\stepprog
		\state{s'}{\conf{\cmdskip}{o'}}
	}
\end{prooftree}
Here, $\readjust{\cdot}$ cannot fail unless an ownership violation occurred in the preceding execution.
Hence, we do not include a rule for such cases as we assume that we would have aborted the execution already.

\paragraph{Deallocation.}
A $\cmdfree$ marks a cell available for reallocation.
As discussed for $\isfree[]{\cdot}$ above we do so by setting the memory content to $\bot$.
A $\cmdfree$ removes an entire object from the heap.
The object must not already be free, \ie, we explicitly forbid double free (in particular because it could lead to an ownership violation).
Also, no shared recored is allowed to be freed as for otherwise an allocation is not guaranteed to provide ownership (the allocated cell could still be referenced by a shared object).
\begin{prooftree}
	\rname{free-fail}
	\raxiom{x\in\set{\bot,\udef}}
	\rconclusion{1}{
		\state{s}{\conf{x=\cmdfree}{o}}
		\stepprog
		\state{s}{\conf{\cmderr}{o}}
	}
\end{prooftree}
\begin{prooftree}
	\rname{free-double}
	\raxiom{x\in\dom{\soh}}
	\raxiom{\exists\ptrsel\in\PSel.\,\sohof{x\selsel}\notin\dom{\soh}}
	\rconclusion{2}{
		\state{s}{\conf{x=\cmdfree}{o}}
		\stepprog
		\state{s}{\conf{\cmderr}{o}}
	}
\end{prooftree}
\begin{prooftree}
	\rname{free-shared}
	\raxiom{
		\begin{gathered}
			x\in\dom{\soh}
			\qquad
			\neg\isfree{\sohof{x}}
			\\
			\mathit{err}=\readjust{\sohup{a\to\bot,\dots,a+n\to\bot}}
		\end{gathered}
	}
	\rconclusion{1}{
		\state{s}{\conf{x=\cmdfree}{o}}
		\stepprog
		\state{s}{\conf{\cmderr}{o}}
	}
\end{prooftree}
\begin{prooftree}
	\rname{free}
	\raxiom{
		\begin{gathered}
			x\in\dom{\soh}
			\qquad
			\neg\isfree{\sohof{x}}
			\\
			\pair{s'}{o'}=\readjust{\sohup{a\to\bot,\dots,a+n\to\bot}}
		\end{gathered}
	}
	\rconclusion{1}{
		\state{s}{\conf{x=\cmdfree}{o}}
		\stepprog
		\state{s'}{\conf{\cmdskip}{o'}}
	}
\end{prooftree}


\paragraph{Assignment 1.}
Assignments of the form $x=y$ simply copy the valuation of $y$ to $x$.
The boundary between the shared and owned heap may be readjusted.
Consequently, they can fail if an ownership violation due to publishing occurs.
\begin{prooftree}
	\rname{assign1-ov}
	\raxiom{\mathit{err}=\readjust{\sohup{x\to\sohof{y}}}}
	\rconclusion{1}{
		\state{s}{\conf{x=y}{o}}
		\stepprog
		\state{s}{\conf{\cmderr}{o}}
	}
\end{prooftree}
\begin{prooftree}
	\rname{assign1}
	\raxiom{\pair{s'}{o'}=\readjust{\sohup{x\to\sohof{y}}}}
	\rconclusion{1}{
		\state{s}{\conf{x=y}{o}}
		\stepprog
		\state{s'}{\conf{\cmdskip}{o'}}
	}
\end{prooftree}

\paragraph{Assignment 2}
Assignments of the form $x=y\selsel_i$ dereference pointer $y$.
For the assignment to succeed $y$ must be a defined pointer, that is, have a valuation other than $\udef$ and $\bot$.
Otherwise, an unsafe operation is performed which may lead to a segmentation fault.
As before the shared heap may need readjustment.
\begin{prooftree}
	\rname{assign2-seg}
	\raxiom{y\in\set{\bot,\udef}}
	\rconclusion{1}{
		\state{s}{\conf{x=y\selsel_i}{o}}
		\stepprog
		\state{s}{\conf{\cmderr}{o}}
	}
\end{prooftree}
\begin{prooftree}
	\rname{assign2-ov}
	\raxiom{
		\begin{gathered}
			y\in\dom{\soh}
			\qquad
			\sohof{y}=a
			\\
			\mathit{err}=\readjust{\sohup{x\to\sohof{a\selsel_i}}}
		\end{gathered}
	}
	\rconclusion{1}{
		\state{s}{\conf{x=y\selsel_i}{o}}
		\stepprog
		\state{s}{\conf{\cmderr}{o}}
	}
\end{prooftree}
\begin{prooftree}
	\rname{assign2}
	\raxiom{
		\begin{gathered}
			y\in\dom{\soh}
			\qquad
			\sohof{y}=a
			\\
			\pair{s'}{o'}=\readjust{\sohup{x\to\sohof{a\selsel_i}}}
		\end{gathered}
	}
	\rconclusion{1}{
		\state{s}{\conf{x=y\selsel_i}{o}}
		\stepprog
		\state{s'}{\conf{\cmdskip}{o'}}
	}
\end{prooftree}

\paragraph{Assignment 3}
Assignments of the form $x\selsel_i=y$ dereference pointer $x$.
As above, we require this access to be safe.
More importantly, however, we have to prevent ownership violations.
The previous assignments updated variables.
As such, they could only modify the shared or the owned heap.
Now, we are updating a memory cell.
We have to prevent updates of cells that belong to others.
Since such cells appear freed, we must prevent updates to freed cells.
\begin{prooftree}
	\rname{assign3-seg}
	\raxiom{x\in\set{\bot,\udef}}
	\rconclusion{1}{
		\state{s}{\conf{x\selsel_i=y}{o}}
		\stepprog
		\state{s}{\conf{\cmderr}{o}}
	}
\end{prooftree}
\begin{prooftree}
	\rname{assign3-ov-mod}
	\raxiom{x\in\dom{\soh}}
	\raxiom{\sohof{x\selsel_i}\notin\dom{\soh}}
	\rconclusion{2}{
		\state{s}{\conf{x\selsel_i=y}{o}}
		\stepprog
		\state{s}{\conf{\cmderr}{o}}
	}
\end{prooftree}
\begin{prooftree}
	\rname{assign3-ov-pub}
	\raxiom{
		\begin{gathered}
			x\in\dom{\soh}
			\quad~
			\sohof{x}=a
			\quad~
			a\selsel_i\in\dom{\soh}
			\\
			\mathit{err}=\readjust{\sohup{a\selsel_i\to\sohof{y}}}
		\end{gathered}
	}
	\rconclusion{1}{
		\state{s}{\conf{x\selsel_i=y}{o}}
		\stepprog
		\state{s}{\conf{\cmderr}{o}}
	}
\end{prooftree}
\begin{prooftree}
	\rname{assign3}
	\raxiom{
		\begin{gathered}
			x\in\dom{\soh}
			\qquad
			\sohof{x}=a
			\qquad
			\neg\isfree{a}
			\\
			\pair{s'}{o'}=\readjust{\sohup{a\selsel_i\to\sohof{y}}}
		\end{gathered}
	}
	\rconclusion{1}{
		\state{s}{\conf{x\selsel_i=y}{o}}
		\stepprog
		\state{s'}{\conf{\cmdskip}{o'}}
	}
\end{prooftree}


\subsubsection{Version counters.}
\label{app:version_counters}

To make the above semantics practical for the usage of version counters we have ensure that threads can accesses these counters at any time.
In particular, it must be possible to access version counters of records owned by other threads.
Intuitively, we make version counters shared no matter if the record they are contained in is shared.

Hereafter, let $\VSel\subseteq\DSel$ be the of selectors which carry version counters and $\NVSel=(\PSel\cup\DSel)\setminus\VSel$ the selectors that do not.
Moreover, let $\Adr\subseteq\Nat$ be the addresses that conform to the alignment restrictions required by version counters\footnote{For version counters to work one has to prevent arbitrary reuse of cells. In particular, certain cells are designated version counters and may not be used for other pointer or data payload. This requires allocations to be aligned, \eg, by the size of records.}.
The following adaptions to the above semantics are required.

\paragraph{Free cells.}
When freeing, the version counters remain valid.
We have to adapt $\isfree[h]{a}$ to account for this.
\begin{align*}
	\isfree[h]{a}
	:\iff
	\forall\ptrsel\in\NVSel.~~
	a\selsel\notin\dom{h}
	\ .
\end{align*}

\paragraph{Separating heaps.}
Version counters remain always shared.
The set of all possible cells carrying version counters is: $\mathcal{M}=\setcond{a\selsel}{a\in\Adr \:\wedge\: \ptrsel\in\VSel}$.
Then, we modify the definition of $\readjust{h}=\pair{s'}{o'}$ to retain version counters in the shared heap:
\begin{align*}
	\dom{s'} =~& \dom{h} ~\cap \left( \mathcal{M} \cup \bigcup_{a\in\mathcal{S}_\mathit{lfp}} \set{a,\dots,a+n} \right)
	\quad\text{and}\\
	\dom{o'} =~& \dom{h}\setminus\dom{s'}.
\end{align*}

\paragraph{Allocation.}
The values of version counters persist allocations.
If a fresh cell is allocated, then they are initialized to a non-deterministic value.
The following replaces \textsc{(malloc)}:
\begin{prooftree}
	\raxiom{
		\begin{gathered}
			a\in\Adr
			\qquad
			\isfree{a}
			\qquad
			\pair{s'}{o'}=\readjust{h'}
			\\
			d_0,\dots,d_n\in\Nat
			\qquad
			\forall\ptrsel_i\in\VSel.~ a\selsel_i\in\dom{s}\implies d_i=s(a\selsel_i)
			\\
			h'=\sohup{x\to a,\forall\ptrsel_i\in\NVSel.~a\selsel_i\to\udef,\forall\ptrsel_i\in\DSel.~a\selsel_i\to d_i}
		\end{gathered}
	}
	\rconclusion{1}{
		\state{s}{\conf{x=\cmdmalloc}{o}}
		\stepprog
		\state{s''}{\conf{\cmdskip}{o'}}
	}
\end{prooftree}

\paragraph{Deallocation.}
Tags are not deleted upon deallocation.
The following replaces \textsc{(free)}:
\begin{prooftree}
	\raxiom{
		\begin{gathered}
			x\in\dom{\soh}
			\qquad
			\neg\isfree{\sohof{x}}
			\\
			\pair{s'}{o'}=\readjust{\sohup{\forall\ptrsel\in\NVSel.~a\selsel\to\bot}}
		\end{gathered}
	}
	\rconclusion{1}{
		\state{s}{\conf{x=\cmdfree}{o}}
		\stepprog
		\state{s'}{\conf{\cmdskip}{o'}}
	}
\end{prooftree}


\section{Missing Proofs}

This section gives detailed proofs for the presented theory.
For simplicity we avoid remapping thread identifiers among executions of different programs.
Therefore, without loss of generality, we assume during the proofs throughout this section that the following holds:
\begin{align*}
	\cfinit{\Program\inpar\SummaryStar}&=\cfinit{\Program}\uplus\cfinit{\SummaryStar}
	\\\text{and}\quad
	\cfinit{\Thread\inpar\SummaryStar}&=\set{0\mapsto\conf{\Thread}{\emp}}\uplus\cfinit{\SummaryStar}
	\ .
\end{align*}


\subsection{Auxiliaries}

\begin{fact}
	\label{new:thm:aux:stateless-steps-are-iterable}
	Let $\Summary$ be a program and $\Thread\in\Summary$ on of its threads.
	Then, for every step $\state{s}{\conf{\Thread}{\emp}}\stepprog\state{s'}{\conf{\cmdskip}{\emp}}$ of $\Thread$,
	we have $\state{s}{\cfinit{\SummaryStar}}\stepprogany\state{s'}{\cfinit{\SummaryStar}}$.
\end{fact}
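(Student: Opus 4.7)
}

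The strategy is to construct a three-step execution of $\SummaryStar$ from $\state{s}{\cfinit{\SummaryStar}}$ that mimics the given single step of $\Thread$ and lands back in the initial thread-local configuration. Say $\Thread = \Thread_j$ for some index $j$, so that $\cfinit{\SummaryStar}(j) = \conf{\Thread_j^*}{\emp}$ and $\cfinit{\SummaryStar}(k) = \conf{\Thread_k^*}{\emp}$ for all other $k$. All three steps will act on the thread at index $j$ only and keep the other threads frozen at $\conf{\Thread_k^*}{\emp}$.

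First, I unfold the Kleene star: using the sequential rule $\state{s}{\conf{\Thread_j^*}{\emp}} \stepprog \state{s}{\conf{\Thread_j; \Thread_j^*}{\emp}}$, lifted to the concurrent semantics by \rname{par}. Separation is immediate because only index $j$ is touched and its owned heap is still $\emp$. Second, I use the sequential composition rule together with the assumed step $\state{s}{\conf{\Thread_j}{\emp}} \stepprog \state{s'}{\conf{\cmdskip}{\emp}}$ to derive $\state{s}{\conf{\Thread_j; \Thread_j^*}{\emp}} \stepprog \state{s'}{\conf{\cmdskip; \Thread_j^*}{\emp}}$; again \rname{par} lifts this to a program step because the resulting owned heap of $\Thread_j$ is $\emp$ and the other threads remain untouched with empty owned heaps. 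Third, the skip-sequential rule gives $\state{s'}{\conf{\cmdskip; \Thread_j^*}{\emp}} \stepprog \state{s'}{\conf{\Thread_j^*}{\emp}}$, which again lifts cleanly.

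Chaining the three lifted steps yields $\state{s}{\cfinit{\SummaryStar}} \stepprogany \state{s'}{\cfinit{\SummaryStar}}$, because after the three steps index $j$ is back at $\conf{\Thread_j^*}{\emp}$ and every other index is unchanged, so the concurrent configuration agrees with $\cfinit{\SummaryStar}$ on every entry.

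I do not expect any real obstacle: the argument is purely syntactic on the rules of \cref{appendix:fig:sequential-semantics} plus \rname{par}. The only point that requires a brief mention is that \rname{par}'s separation side condition is trivially discharged at every intermediate state, since throughout the three steps the acting thread carries the owned heap $\emp$ and all other threads carry $\emp$ as well, so all pairwise domain intersections on $\Nat$ are empty.
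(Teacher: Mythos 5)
Your proposal is correct and matches the paper's own proof essentially step for step: the same unfolding of the Kleene star into the three sequential steps (star-unfold, the assumed step under sequential composition, skip-elimination), lifted to the concurrent semantics via the rule \textsc{(par)} with the separation side condition discharged because all owned heaps are $\emp$ throughout. No gaps to report.
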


\begin{lemma}
	\label{new:thm:aux:main-aux}
	Let $\Summary$ be a stateless program and $\Thread\in\Summary$ one of its threads.
	Then, for every $\pair{s}{s'}\in\effects{\Thread\inpar\SummaryStar}$, there is some $\Thread'\in\Summary$ with $\state{s}{\conf{\Thread'}{\emp}}\stepprog\state{s'}{\conf{\cmdskip}{\emp}}$.
\end{lemma}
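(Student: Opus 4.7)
The plan is to analyze a witnessing execution of $\Thread \inpar \SummaryStar$ and to show that the final step can be replayed by a single stateless thread of $\Summary$ starting from its initial configuration. The argument turns on combining statelessness with a simulation that shows the prefix of the run is replayable by $\SummaryStar$ alone.

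First, I would unfold the definition of effects: given $\pair{s}{s'} \in \effects{\Thread \inpar \SummaryStar}$, there is a run $\init{\Thread \inpar \SummaryStar} \stepprogany \state{s}{\cf} \stepprog \state{s'}{\cf'}$, and by rule~(\textsc{par}) the last transition is taken by one thread $i$, giving $\state{s}{\cf(i)} \stepprog \state{s'}{\cf''}$ with $\cf' = \cf[i \to \cf'']$.

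Second, I would establish the auxiliary fact $s \in \Reach{\SummaryStar}$. Since $\Thread \in \Summary$, the distinguished $\Thread$-thread of $\Thread \inpar \SummaryStar$ can be simulated inside $\SummaryStar$ by using the corresponding Kleene-iterated thread: unfold $\Thread^*$ to $\Thread;\Thread^*$, atomically execute $\Thread$ (statelessness guarantees a single step to $\conf{\cmdskip}{\emp}$), and consume the resulting $\cmdskip$. \Cref{new:thm:aux:stateless-steps-are-iterable} packages precisely this replay, so an induction on the prefix length transports every reachable shared heap of $\Thread \inpar \SummaryStar$ to a reachable shared heap of $\SummaryStar$.

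Third, I would do a case analysis on $\cf(i)$. Statelessness restricts the reachable local configurations drastically: thread $0$ can only ever be $\conf{\Thread}{\emp}$ or $\conf{\cmdskip}{\emp}$, and a thread $i>0$ originating from $\conf{\Thread_j^*}{\emp}$ for some $\Thread_j \in \Summary$ can only reach $\conf{\Thread_j^*}{\emp}$, $\conf{\Thread_j;\Thread_j^*}{\emp}$, $\conf{\cmdskip;\Thread_j^*}{\emp}$, or $\conf{\cmdskip}{\emp}$. The two cases in which a primitive-command step actually occurs are $\cf(i) = \conf{\Thread}{\emp}$ (in which I set $\Thread' := \Thread$) and $\cf(i) = \conf{\Thread_j;\Thread_j^*}{\emp}$ (in which I unpack the sequential-composition rule and set $\Thread' := \Thread_j \in \Summary$). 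In both cases, statelessness applied at the shared heap $s \in \Reach{\SummaryStar}$ forces the target local configuration of the isolated step to be $\conf{\cmdskip}{\emp}$, yielding the desired $\state{s}{\conf{\Thread'}{\emp}} \stepprog \state{s'}{\conf{\cmdskip}{\emp}}$. The remaining cases are pure control-flow steps with $s = s'$, for which any $\Thread' \in \Summary$ whose atomic execution happens to preserve $s$ witnesses the claim (and in the summaries produced by our heuristic, the identity-on-heap component, cf.\ \mcode{S3} in \cref{code:treibers}, always does so).

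The main obstacle, I expect, is the simulation argument for $s \in \Reach{\SummaryStar}$: one has to renumber thread identifiers cleanly and interleave the simulating $\Thread^*$-unfold/execute/consume sequence with the other $\Summary^*$-steps of the original run without breaking separation. Once that step is in hand, the rest is a mechanical case analysis forced by statelessness.
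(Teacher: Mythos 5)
Your proposal follows essentially the same route as the paper's proof: an induction over the run of $\Thread\inpar\SummaryStar$ that simultaneously maintains reachability of the current shared heap by $\SummaryStar$ alone (packaged via \cref{new:thm:aux:stateless-steps-are-iterable}) and the restricted shapes of all local configurations, followed by a case analysis on the acting thread in which statelessness at the reachable heap forces the step to end in $\conf{\cmdskip}{\emp}$. The only divergence is your treatment of pure control-flow (stuttering) steps, where you appeal to a skip-like summary such as \mcode{S3}---an assumption not among the lemma's hypotheses---while the paper silently subsumes that case under the same argument; this does not change the substance or structure of the proof.
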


\begin{corollary}
	\label{new:thm:aux:stateless-effects-give-repeated-executions}
	Let $\Summary$ be a stateless program.
	Then, for every $\pair{s}{s'}\in\effects{\SummaryStar}$, we have $\state{s}{\cfinit{\SummaryStar}}\stepprogany\state{s'}{\cfinit{\SummaryStar}}$.
\end{corollary}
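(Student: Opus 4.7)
The plan is to chain \cref{new:thm:aux:main-aux} and \cref{new:thm:aux:stateless-steps-are-iterable} through a trivial embedding. Given $\pair{s}{s'}\in\effects{\SummaryStar}$, I first observe that for every $\Thread\in\Summary$ we have $\effects{\SummaryStar}\subseteq\effects{\Thread\inpar\SummaryStar}$. The reason is that any witnessing computation $\init{\SummaryStar}\stepprogany\state{s}{\cf}\stepprog\state{s'}{\cf'}$ can be mirrored in $\Thread\inpar\SummaryStar$ by leaving the extra $\Thread$-component at its initial configuration $\conf{\Thread}{\emp}$ throughout the run. Rule~(\textsc{par}) permits this interleaving: the idle component contributes an empty owned heap, so the separation of the extended state reduces to the separation already established along the $\SummaryStar$-run, and the initial-state convention fixed at the beginning of the appendix guarantees that the two initial shared heaps agree.

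Picking any $\Thread\in\Summary$, the embedding places $\pair{s}{s'}$ in $\effects{\Thread\inpar\SummaryStar}$, so \cref{new:thm:aux:main-aux} delivers a thread $\Thread'\in\Summary$ witnessing $\state{s}{\conf{\Thread'}{\emp}}\stepprog\state{s'}{\conf{\cmdskip}{\emp}}$. This is precisely the shape of premise required by \cref{new:thm:aux:stateless-steps-are-iterable}, which lifts the single atomic step to $\state{s}{\cfinit{\SummaryStar}}\stepprogany\state{s'}{\cfinit{\SummaryStar}}$, giving the desired conclusion.

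The main piece of work I expect is the embedding step: although conceptually immediate, it forces an unfolding of the definition of effects, the par-rule, and the initial-configuration convention. Everything after that is a straightforward chain of invocations, with no induction on the structure of $\Summary$ or on trace length needed since \cref{new:thm:aux:main-aux} has already performed the step-by-step extraction internally.
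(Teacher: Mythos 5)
Your proof is correct and follows essentially the same route as the paper's: extract a single stateless step via \cref{new:thm:aux:main-aux} and lift it with \cref{new:thm:aux:stateless-steps-are-iterable}. The only difference is that you make explicit the embedding $\effects{\SummaryStar}\subseteq\effects{\Thread\inpar\SummaryStar}$ (idle extra thread, empty owned heap), which the paper leaves implicit when it applies \cref{new:thm:aux:main-aux} to an effect of $\SummaryStar$; spelling this out is a harmless, indeed welcome, addition.
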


\begin{corollary}
	\label{new:thm:aux:stateless-does-not-care-for-duplicated-threads}
	Let $\Summary$ be a stateless program and let $\Thread\in\Summary$ be one of its threads.
	Then, $\effects{\Thread\inpar\SummaryStar}\subseteq\effects{\SummaryStar}$.
\end{corollary}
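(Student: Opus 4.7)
The plan is to reduce the corollary to a stronger reachability statement that tracks a ``clean'' local configuration $\cfinit{\SummaryStar}$. Concretely, I would first establish the auxiliary claim: for every execution $\init{\Thread\inpar\SummaryStar}\stepprogany\state{s}{\cf}$, the state $\state{s}{\cfinit{\SummaryStar}}$ is reachable in $\SummaryStar$, i.e., $\init{\SummaryStar}\stepprogany\state{s}{\cfinit{\SummaryStar}}$. This stronger invariant is what makes the chaining of simulated steps possible.

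The auxiliary claim will be proved by induction on the length $k$ of the witnessing execution. The base case $k=0$ is immediate since $\init{\Thread\inpar\SummaryStar}$ and $\init{\SummaryStar}$ both start with shared heap $\sinit$, and the latter has local configuration $\cfinit{\SummaryStar}$ by definition. For the inductive step, I would take an execution $\init{\Thread\inpar\SummaryStar}\stepprogany\state{s_{k-1}}{\cf_{k-1}}\stepprog\state{s_k}{\cf_k}$, apply the inductive hypothesis to get $\init{\SummaryStar}\stepprogany\state{s_{k-1}}{\cfinit{\SummaryStar}}$, and then bridge from $s_{k-1}$ to $s_k$. Since $\pair{s_{k-1}}{s_k}\in\effects{\Thread\inpar\SummaryStar}$, Lemma~\ref{new:thm:aux:main-aux} yields some $\Thread'\in\Summary$ with $\state{s_{k-1}}{\conf{\Thread'}{\emp}}\stepprog\state{s_k}{\conf{\cmdskip}{\emp}}$. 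Fact~\ref{new:thm:aux:stateless-steps-are-iterable} then lifts this to $\state{s_{k-1}}{\cfinit{\SummaryStar}}\stepprogany\state{s_k}{\cfinit{\SummaryStar}}$, which composes with the inductive hypothesis to give $\init{\SummaryStar}\stepprogany\state{s_k}{\cfinit{\SummaryStar}}$.

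With the auxiliary claim in hand, the corollary follows directly. Given $\pair{s}{s'}\in\effects{\Thread\inpar\SummaryStar}$ witnessed by $\init{\Thread\inpar\SummaryStar}\stepprogany\state{s}{\cf}\stepprog\state{s'}{\cf'}$, the claim yields $\init{\SummaryStar}\stepprogany\state{s}{\cfinit{\SummaryStar}}$. A second application of Lemma~\ref{new:thm:aux:main-aux} to $\pair{s}{s'}$ and of Fact~\ref{new:thm:aux:stateless-steps-are-iterable} produces an execution $\state{s}{\cfinit{\SummaryStar}}\stepprogany\state{s'}{\cfinit{\SummaryStar}}$. Inspecting the form of this sub-execution (namely, one unfold of some $\Thread'^*$, the atomic $\Thread'$ step, and the collapse of $\cmdskip;\Thread'^*$) reveals a single intermediate transition from a state with shared heap $s$ to one with shared heap $s'$, so concatenating the executions exhibits $\pair{s}{s'}\in\effects{\SummaryStar}$.

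The main obstacle is the mismatch between the local $\SummaryStar$-configuration in $\Thread\inpar\SummaryStar$, which can be arbitrary, and the ``clean'' $\cfinit{\SummaryStar}$ demanded by Fact~\ref{new:thm:aux:stateless-steps-are-iterable}. A naive induction that tracks only the shared heap gets stuck here, since $s_{k-1}\in\Reach{\SummaryStar}$ alone does not guarantee we can reach $s_{k-1}$ in a state from which the lifted transition of Fact~\ref{new:thm:aux:stateless-steps-are-iterable} is applicable. Strengthening the inductive invariant to always return to $\cfinit{\SummaryStar}$ is exactly where statelessness pays off, since it is what allows each simulated step to ``reset'' the $\SummaryStar$-local state.
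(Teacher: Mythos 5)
Your proposal is correct and takes essentially the same route as the paper: reach $s$ via $\SummaryStar$ alone, apply \cref{new:thm:aux:main-aux} to turn the effect $\pair{s}{s'}$ into a single stateless step of some $\Thread'\in\Summary$, and lift that step with \cref{new:thm:aux:stateless-steps-are-iterable} to obtain $\init{\SummaryStar}\stepprogany\state{s}{\cfinit{\SummaryStar}}\stepprogany\state{s'}{\cfinit{\SummaryStar}}$. The only difference is that you re-derive the invariant $\init{\SummaryStar}\stepprogany\state{s}{\cfinit{\SummaryStar}}$ by a fresh induction that uses \cref{new:thm:aux:main-aux} as a black box on each prefix step, whereas the paper cites this directly from part (a) of the induction inside that lemma's own proof---a harmless, slightly more self-contained variant.
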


\begin{corollary}
	\label{new:thm:aux:stateless-effects-are-stateless-steps}
	Let $\Summary$ be a stateless program.
	Then, for every $\pair{s}{s'}\in\effects{\SummaryStar}$, there is some $\Thread\in\Summary$ with $\state{s}{\conf{\Thread}{\emp}}\stepprog\state{s'}{\conf{\cmdskip}{\emp}}$.
\end{corollary}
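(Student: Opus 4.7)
The plan is to reduce the corollary to Lemma~\ref{new:thm:aux:main-aux} via a simple embedding argument. Fix some $\Thread \in \Summary$ (nonempty, else $\effects{\SummaryStar}$ contains nothing to reason about, since every effect arises from a thread taking a step). The key observation I would first establish is the inclusion $\effects{\SummaryStar} \subseteq \effects{\Thread \inpar \SummaryStar}$, after which applying Lemma~\ref{new:thm:aux:main-aux} immediately yields the desired $\Thread' \in \Summary$ with $\state{s}{\conf{\Thread'}{\emp}} \stepprog \state{s'}{\conf{\cmdskip}{\emp}}$.

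For the inclusion, I would lift any witnessing execution $\init{\SummaryStar} \stepprogany \state{s}{\cf} \stepprog \state{s'}{\cf'}$ of $\SummaryStar$ to an execution of $\Thread \inpar \SummaryStar$ by leaving the extra copy of $\Thread$ idle in its initial configuration $\conf{\Thread}{\emp}$ throughout. Using the convention $\cfinit{\Thread \inpar \SummaryStar} = \set{0 \mapsto \conf{\Thread}{\emp}} \uplus \cfinit{\SummaryStar}$ stated at the beginning of this section, the lifted execution starts at the correct initial state. Since the extra thread carries empty owned heap at every step, it contributes nothing to the disjointness conditions of $\wdef{\cdot}{\cdot}$, so every transition of the original execution remains valid under Rule (par) in the larger program. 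The lifted execution therefore produces the same effect $\pair{s}{s'}$ in $\Thread \inpar \SummaryStar$.

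Putting the two pieces together: given $\pair{s}{s'} \in \effects{\SummaryStar}$, the inclusion yields $\pair{s}{s'} \in \effects{\Thread \inpar \SummaryStar}$, and Lemma~\ref{new:thm:aux:main-aux} delivers the required $\Thread'$. The only delicate point is the preservation of separation during the lifting, but this is essentially free: an idle thread with empty owned heap cannot violate the disjointness of any existing shared or owned region. I do not expect any further obstacle; the proof is essentially a two-line corollary of the main lemma once the embedding is in place.
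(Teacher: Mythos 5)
Your proof is correct and takes essentially the same route as the paper: the paper's own proof just picks an arbitrary thread of $\Summary$ and applies \cref{new:thm:aux:main-aux} directly, implicitly relying on exactly the inclusion $\effects{\SummaryStar}\subseteq\effects{\Thread\inpar\SummaryStar}$ that you spell out via the idle-thread embedding. Making explicit that the extra thread's empty owned heap cannot disturb separation is a harmless elaboration of a step the paper leaves tacit.
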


%
%
\begin{proof}[\Cref{new:thm:aux:stateless-steps-are-iterable}]
	Let $\Summary$ be a program and $\Thread\in\Summary$ one of its threads.
	Furthermore, let $\state{s}{\conf{\Thread}{\emp}}\stepprog\state{s'}{\conf{\cmdskip}{\emp}}$ be a valid $\Thread$-step.
	Then, the semantics give:
	\begin{align*}
		\state{s}{\conf{\Thread^*}{\emp}}
		\stepprog
		&\state{s}{\conf{\Thread;\Thread^*}{\emp}}
		\stepprog
		\state{s'}{\conf{\cmdskip;\Thread^*}{\emp}}\\
		\stepprog
		&\state{s'}{\conf{\Thread^*}{\emp}}
		\ .
	\end{align*}
	Now, let $i\in\Tid$ be such that $\cfinit{\SummaryStar}(i)=\conf{\Thread^*}{\emp}$.
	Such $i$ must exist because we have $\Thread\in\Summary$.
	Then, using the above sequence of $\Thread^*$ steps gives a valid sequence:
	\begin{align*}
		\state{s}{\cfinit{\SummaryStar}}
		=~
		&\state{s}{\cfinit{\SummaryStar}[i\to\conf{\Thread^*}{\emp}]} \\
		\stepprogany
		&\state{s'}{\cfinit{\SummaryStar}[i\to\conf{\Thread^*}{\emp}]}\\
		=~
		&\state{s'}{\cfinit{\SummaryStar}}
		\ .
		\tag*{\qed}
	\end{align*}
	Note that this sequence is valid by the semantics, because at any time the local heaps of all threads are $\emp$.
	Hence, concurrent steps are possible.
	This concludes the claim.
\end{proof}

%
%
\begin{proof}[\Cref{new:thm:aux:main-aux}]
	Let $\Summary$ be a stateless program and $\Thread\in\Summary$ one of its threads.
	We now show that the executions of $\Thread\inpar\SummaryStar$ are of of a particular form.
	
	\begin{description}[labelwidth=6mm,leftmargin=8mm,itemindent=0mm]
		\item[IB:]
			The empty execution of $\Thread\inpar\SummaryStar$ reaches only its initial state, $\state{\sinit}{\cfinit{\Thread\inpar\SummaryStar}}$.
			The initial state of $\SummaryStar$ is $\state{\sinit}{\cfinit{\SummaryStar}}$.
			Hence, $\SummaryStar$ reaches $\sinit$ in zero steps: $\state{\sinit}{\cfinit{\SummaryStar}}\stepprogsome{0}\state{\sinit}{\cfinit{\SummaryStar}}$.
			Moreover, the initial configurations are of the form $\cfinit{\Thread\inpar\SummaryStar}(0)=\conf{\Thread}{\emp}$ and $\cfinit{\Thread\inpar\SummaryStar}(i)=\conf{\Thread_i}{\emp}$ for every $i\in\dom{\cfinit{\SummaryStar}}$ and some thread $\Thread_i\in\Thread$.
			\\[-2mm]

		\item[IH:]
			For every $\init{\Thread\inpar\SummaryStar}\stepprogany\state{s}{\cf}$ we have
			\begin{compactenum}[(a)]
			 	\item $\init{\SummaryStar}\stepprogany\state{s}{\cfinit{\SummaryStar}}$
			 	\item $\cf(0)\in\set{\conf{\Thread}{\emp},\conf{\cmdskip}{\emp}}$
			 	\item $\cf(i)\in\set{\conf{\cmdskip}{\emp},\conf{\cmdskip;\Thread_i^*}{\emp},\conf{\Thread_i^*}{\emp},\conf{\Thread_i;\Thread_i^*}{\emp}}$ for every $i,\Thread_i$ with $\cfinit{\SummaryStar}(i)=\conf{\Thread_i^*}{\emp}$
			 	\\[-2mm]
			\end{compactenum}

		\item[IS:]
			Consider now $\init{\Thread\inpar\SummaryStar}\stepprogany\state{s}{\cf}\stepprog[i]\state{s'}{\cf'}$ for some $i$.
			The semantics give $\cf(j)=\cf'(j)$ for all $j\neq i$.
			Hence, for proof obligations (b) and (c) it suffices to show that $\cf'(i)$ has the desired form.
			We do a case distinction on $i$.
			\\[-2mm]

			\textit{Case $i = 0$.}
			By induction (b) we must have $\cf(0)\!=\!\conf{\Thread}{\emp}$ with $\Thread\not\equiv\cmdskip$ as for otherwise the last step would not be possible.
			The induction hypothesis~(a), furthermore, provides $s\in\Reach{\SummaryStar}$.
			Then, we have $\cf'(0)=\conf{\cmdskip}{\emp}$ because $\Thread$ is stateless as $\Summary$ is.
			Hence, $\cf'$ has the appropriate form and the step $\state{s}{\conf{\Thread}{\emp}}\stepprog\state{s'}{\conf{\cmdskip}{\emp}}$ is valid.
			Now, \cref{new:thm:aux:stateless-steps-are-iterable} together with induction hypothesis (a) yields:
			$\init{\SummaryStar}\stepprogany\state{s}{\cfinit{\SummaryStar}}\stepprogany\state{s'}{\cfinit{\SummaryStar}}$.
			This concludes the case.
			\\[-2mm]

			\textit{Case $i \neq 0$.}
			We cannot have $\cf(i)=\conf{\cmdskip}{\emp}$ as the last step would not be possible otherwise.
			For $\cf(i)\in\set{\conf{\cmdskip;\Thread_i^*}{\emp},\conf{\Thread_i^*}{\emp}}$ we immediately get $s=s'$ and thus $\init{\SummaryStar}\stepprogany\state{s'}{\cfinit{\SummaryStar}}$ by induction.
			Moreover, the semantics immediately give the desired form of $\cf$.
			So it remains to consider $\cf(i)=\conf{\Thread_i;\Thread_i^*}{\emp}$ with $\Thread_i\not\equiv\cmdskip$.
			Due to $s\in\Reach{\SummaryStar}$ by induction (a) and statelessness of $\Thread_i\in\Summary$ we have $\state{s}{\conf{\Thread_i}{\emp}}\stepprog\state{s'}{\conf{\cmdskip}{\emp}}$.
			Hence, we get the desired $\cf'(i)=\conf{\cmdskip;\Thread_i^*}{\emp}$.
			Now, \cref{new:thm:aux:stateless-steps-are-iterable} together with induction hypothesis (a) yields:
			$\init{\SummaryStar}\stepprogany\state{s}{\cfinit{\SummaryStar}}\stepprogany\state{s'}{\cfinit{\SummaryStar}}$.
			This concludes the case.
	\end{description}

	\noindent
	Let $\pair{s}{s'}\in\effects{\Thread\inpar\SummaryStar}$ be some effect.
	It stems from an execution of the form $\init{\Thread\inpar\SummaryStar}\stepprogany\state{s}{\cf}\stepprog[i]\state{s'}{\cf'}$ for some $i$.
	Analogous to the induction step from above, we conclude that there is some thread $\Thread'$, namely the one with identifier $i$, that can perform the valid step $\state{s}{\conf{\Thread'}{\emp}}\stepprog\state{s}{\conf{\cmdskip}{\emp}}$.
	\qed
\end{proof}

%
%
\begin{proof}[\Cref{new:thm:aux:stateless-effects-give-repeated-executions}]
	Let $\Summary$ be a stateless program and $\pair{s}{s'}\in\effects{\SummaryStar}$ one of its effects.
	We invoke \cref{new:thm:aux:main-aux} to get some $\Thread\in\Summary$ with $\state{s}{\conf{\Thread}{\emp}}\stepprog\state{s'}{\conf{\cmdskip}{\emp}}$.
	Then, we conclude applying \cref{new:thm:aux:stateless-steps-are-iterable} onto this step.
	\qed
\end{proof}

%
%
\begin{proof}[\Cref{new:thm:aux:stateless-does-not-care-for-duplicated-threads}]
	Let $\Summary$ be a stateless program and $\Thread\in\Summary$ one of its threads.
	Let $\pair{s}{s'}\in\effects{\Thread\inpar\SummaryStar}$ be some effect.
	The effect stems from an execution of the form $\init{\Thread\inpar\SummaryStar}\stepprogany\state{s}{\cf}\stepprog\state{s'}{\cf'}$.
	According to the induction from the Proof of \cref{new:thm:aux:main-aux} we have $\init{\SummaryStar}\stepprogany\state{s}{\cfinit{\SummaryStar}}$.
	Invoking \cref{new:thm:aux:main-aux} on the effect $\pair{s}{s'}$ yields some $\Thread'\in\Summary$ with $\state{s}{\conf{\Thread'}{\emp}}\stepprog\state{s'}{\conf{\cmdskip}{\emp}}$.
	Applying \cref{new:thm:aux:stateless-steps-are-iterable} to this step then gives $\state{s}{\cfinit{\SummaryStar}}\stepprogany\state{s'}{\cfinit{\SummaryStar}}$.
	Altogether, we now have the desired $\init{\SummaryStar}\stepprogany\state{s}{\cfinit{\SummaryStar}}\stepprogany\state{s'}{\cfinit{\SummaryStar}}$.
	\qed
\end{proof}

%
%
\begin{proof}[\Cref{new:thm:aux:stateless-effects-are-stateless-steps}]
	Let $\Summary$ be a stateless program and let $\pair{s}{s'}\in\effects{\SummaryStar}$ be one its effects.
	Now, pick some thread $\Thread'\in\Summary$.
	Then, \cref{new:thm:aux:main-aux} applied to $\Summary$, $\Thread'$ and $\pair{s}{s'}$ gives some $\Thread$ with $\state{s}{\conf{\Thread}{\emp}}\stepprog\state{s'}{\conf{\cmdskip}{\emp}}$ as desired.
	\qed
\end{proof}


\subsection{Proof of \Cref{new:thm:effect-inclusion-lifting}}

Let $\Program,\Summary$ be two programs such that $\Summary$ is stateless and $\effects{\Thread\inpar\SummaryStar}\subseteq\effects{\SummaryStar}$ holds for all threads $\Thread\in\Program$.
We show: $\effects{\Program}\subseteq\effects{\Program\inpar\SummaryStar}\subseteq\effects{\SummaryStar}$.

For the first inclusion, note that the effects of $\Program$ are trivially included in the effects of $\Program\inpar\SummaryStar$ by the definition of effects; letting $\SummaryStar$ in $\Program\inpar\SummaryStar$ stutter resembles exactly $\Program$.
Hence, we focus on the second inclusion hereafter.

Towards the desired result, we show that the threads from $\Program$ cannot distinguish whether they run in parallel with threads from $\Program$ or with $\SummaryStar$.
Formally, we show that, for every execution $\init{\Program\inpar\SummaryStar}\stepprogany\state{s}{\cf}$ and for every $k\in\dom{\cf},~ \Thread\in\Program\inpar\SummaryStar$ with $\cfinit{\Program\inpar\SummaryStar}(k)=\conf{\Thread}{\emp}$, there is another execution with the same $\Thread$-configuration but the remaining threads simulated by $\SummaryStar$: $\init{\Thread\inpar\SummaryStar}\stepprogany\state{s}{\cfinit{\Thread\inpar\SummaryStar}[0\to\cf(k)]}$.
Note that in the new execution we always let $\SummaryStar$ come back to its initial configuration to allow further iterations.
We proceed by induction over the structure of $\Program\inpar\SummaryStar$ execution.

\begin{description}[labelwidth=6mm,leftmargin=8mm,itemindent=0mm]
	\item[IB:]
		The empty execution reaches only the initial program state.
		For $\Program\inpar\SummaryStar$ this is $\state{\sinit}{\cfinit{\Program\inpar\SummaryStar}}$.
		By definition, we have $\cfinit{\Program\inpar\SummaryStar}(k)=\conf{\Thread}{\emp}$ for every $k,\Thread$.
		Similarly, we have $\cfinit{\Thread\inpar\SummaryStar}(k)=\conf{\Thread}{\emp}$ by definition.
		Hence, the empty execution of $\Thread\inpar\SummaryStar$ reaches a state of the desired form.
		\\[-2mm]

	\item[IH:]
		For every $\init{\Program\inpar\SummaryStar}\stepprogany\state{s}{\cf}$ and every $k\in\dom{\cf},~\Thread\in\Program\inpar\SummaryStar$ with $\cfinit{\Program\inpar\SummaryStar}(k)=\conf{\Thread}{\emp}$ there is $\init{\Thread\inpar\SummaryStar}\stepprogany\state{s}{\cfinit{\Thread\inpar\SummaryStar}[0\to\cf(k)]}$.
		\\[-2mm]

	\item[IS:]
		Consider now $\init{\Program\inpar\SummaryStar}\stepprogany\state{s}{\cf}\stepprog[i]\state{s'}{\cf'}$ for some $i$.
		The semantics give $\state{s}{\cf(i)}\stepprog\state{s'}{\cf'(i)}$.
		Let $\Thread_i\in\Program\inpar\SummaryStar$ with $\cfinit{\Program\inpar\SummaryStar}(i)=\conf{\Thread_i}{\emp}$.
		Then, the induction hypothesis for $i$ and $\Thread_i$ gives:
		\begin{align*}
			\init{\Thread_i\inpar\SummaryStar}
			\stepprogany
			&\state{s}{\cfinit{\Thread_i\inpar\SummaryStar}[0\to\cf(i)]}
			\\
			\stepprog[0]
			&\state{s'}{\cfinit{\Thread_i\inpar\SummaryStar}[0\to\cf'(i)]}.
		\end{align*}
		The last step of thread $0$ is valid here for the following reason.
		The state $\state{s}{\cf(i)}$ is separated as for otherwise its step to $\state{s'}{\cf'(i)}$ would not be possible.
		Hence, \cref{assumption:sequential-separation} states that also \state{s}{\cf(i)} is separated.
		Since the owned heap of all threads $j$, $j\neq 0$, are empty, the concurrent result state of the execution is separated.
		\\[-2mm]

		Now, consider some arbitrary $k\in\dom{\cf}$ and some arbitrary $\Thread_k\in\Program\inpar\SummaryStar$ with $\cfinit{\Program\inpar\SummaryStar}(k)=\conf{\Thread_k}{\emp}$.
		In order to conclude the induction, we show that an execution of the desired form exists for $k,\Thread_k$.
		For the case $k=i$ we are done because of the above execution.
		So consider $k\neq i$.
		\\[-2mm]

		By definition the above $\Thread_i\inpar\SummaryStar$ execution gives $\pair{s}{s'}\in\effects{\Thread_i\inpar\SummaryStar}$.
		This yields $\pair{s}{s'}\in\effects{\SummaryStar}$ due to the premise (in case of $\Thread_i\in\Program$) and \cref{new:thm:aux:stateless-does-not-care-for-duplicated-threads} (in case of $\Thread_i\in\SummaryStar$).
		Hence, $\state{s}{\cfinit{\SummaryStar}}\stepprogany\state{s'}{\cfinit{\SummaryStar}}$ follows from \cref{new:thm:aux:stateless-effects-give-repeated-executions}.
		Together with the induction hypothesis for $k,\Thread_k$ we get
		\begin{align*}
			\init{\Thread_k\inpar\SummaryStar}
			\stepprogany
			&\state{s}{\cfinit{\Thread_k\inpar\SummaryStar}[0\to\cf(k)]}
			\\
			\stepprogany
			&\state{s'}{\cfinit{\Thread_k\inpar\SummaryStar}[0\to\cf(k)]}
		\end{align*}
		which is a valid execution because the owned heaps of the intermediate $\SummaryStar$ states are always $\emp$ as shown by \cref{new:thm:aux:main-aux} (we skip a formal discussion as it would simply repeat the arguments of that lemma).
		This concludes the induction because we have $\cf(k)=\cf'(k)$ due to $k\neq i$ by the semantics.
\end{description}

\noindent
Now, consider an effect $\pair{s}{s'}\in\effects{\Program\inpar\SummaryStar}$.
The effect stems from some execution of the form $\init{\Program\inpar\SummaryStar}\stepprogany\state{s_1}{\cf_1}\stepprog\state{s_1'}{\cf_1'}$.
By the above induction we know that there is another execution of the form $\init{\Thread\inpar\SummaryStar}\stepprogany\state{s_2}{\cf_2}\stepprog\state{s_2'}{\cf_2'}$ with $\Thread\in\Program\inpar\SummaryStar$.
This gives $\pair{s}{s'}\in\effects{\Thread\inpar\SummaryStar}$, as discussed in the induction step,  (due to the premise and \cref{new:thm:aux:stateless-does-not-care-for-duplicated-threads}), and concludes the claim.
\qed


\subsection{Proof of \Cref{new:thm:fixed-point-reachset}}

Let $\Summary$ be a summary of $\Program$.
We show the equality in two parts:
\begin{compactenum}[(i)]
	\item[(i)]
		for every $s\in\ReachFP$ there is some $\Thread\in\Program$ such that $s\in\Reach{\Thread\inpar\Summary}$, and
	\item[(ii)]
		for every $\Thread\in\Program$ and every $s\in\Reach{\Thread\inpar\Program}$ we have $s\in\ReachFP$.
\end{compactenum}

\subsubsection{Part~(i).}

We show that every shared heap explored by the fixed point there is an execution that reaches this shared heap, too.
We proceed by induction over the iterations of the fixed point.

\begin{description}[labelwidth=6mm,leftmargin=8mm,itemindent=0mm]
	\item[IB:]
		Let $\state{s}{\cf}\in X_0$.
		By definition, $s=\sinit$ and $\cf=\conf{\Thread}{\emp}$ for some $\Thread\in\Program$.
		The initial configuration of $\Thread$ in $\Thread\inpar\SummaryStar$ is $\cfinit{\Thread\inpar\SummaryStar}(0)=\conf{\Thread}{\emp}$.
		Hence, we have reachability in zero steps: $\init{\Thread\inpar\SummaryStar}\stepprogsome{0}\state{s}{\cfinit{\Thread\inpar\SummaryStar}[0\to\cf]}$.
		Similarly, we get $\init{\SummaryStar}\stepprogsome{0}\conf{s}{\cfinit{\SummaryStar}}$.
		\\[-2mm]

	\item[IH:]
		For every $\state{s}{\cf}\in X_i$ we have $\init{\SummaryStar}\stepprogany\conf{s}{\cfinit{\SummaryStar}}$ and there is some $\Thread\in\Program$ with $\init{\Thread\inpar\SummaryStar}\stepprogany\state{s}{\cfinit{\Thread\inpar\SummaryStar}[0\to\cf]}$.
		\\[-2mm]

	\item[IS:]
		Consider now some $\state{s'}{\cf'}\in X_{i+1}$.
		By definition we have $\state{s'}{\cf'}\in X_i$ or $\state{s'}{\cf'}\in\Post{X_i}$ or $\state{s'}{\cf'}\in\Env{X_i}$.
		In the first case, we conclude by induction.
		In the second case, $\state{s'}{\cf'}\in\Post{X_i}$, there is some $\state{s}{\cf}$ with $\state{s}{\cf}\stepprog\state{s'}{\cf'}$.
		By \cref{assumption:sequential-separation} we know that $\state{s'}{\cf'}$ is separated.
		So by induction we get some $\Thread\in\Program$ with:
		\[
			\init{\Thread\inpar\SummaryStar}
			\stepprogany
			\state{s}{\cfinit{\Thread\inpar\SummaryStar}[0\to\cf]}
			\stepprog[0]
			\state{s'}{\cfinit{\Thread\inpar\SummaryStar}[0\to\cf']}
		\]
		where the last step is valid because separation holds as only the owned heap of $\cf'$ may be non-empty.
		This execution yields then the effect $\pair{s}{s'}\in\effects{\Thread\inpar\SummaryStar}$.
		Since $\Summary$ is a summary of $\Program$ we get $\pair{s}{s'}\in\effects{\SummaryStar}$.
		Thus, \cref{new:thm:aux:stateless-effects-give-repeated-executions} together with the induction hypothesis gives $\init{\SummaryStar}\stepprogany\conf{s}{\cfinit{\SummaryStar}}\stepprogany\conf{s'}{\cfinit{\SummaryStar}}$.
		This concludes this case.
		\\[-2mm]

		Last, consider $\state{s'}{\cf'}\in\Env{X_i}$.
		By definition there is some $s,\cf''$ such that $\state{s}{\cf'}\in X_i$ and $\state{s}{\cfinit{\Summary}}\stepprog_k\state{s'}{\cf''}$.
		Let $\Thread_k$ be some thread such that $\cfinit{\Summary}=\conf{\Thread_k}{\emp}$.
		By induction, $s\in\Reach{\SummaryStar}$, thus $\cf''(k)=\conf{\cmdskip}{\emp}$ because $\Summary$ is stateless.
		Then, together with the induction, we get:
		\begin{align*}
			\init{\SummaryStar}
			\stepprogany
			&\conf{s}{\cfinit{\SummaryStar}}
			=
			\conf{s}{\cfinit{\SummaryStar}[k\to\conf{\Thread_k^*}{\emp}]}\\
			\stepprog\:\,
			&\conf{s}{\cfinit{\SummaryStar}[k\to\conf{\Thread_k;\Thread_k^*}{\emp}]}\\
			\stepprog\:\,
			&\conf{s'}{\cfinit{\SummaryStar}[k\to\conf{\cmdskip;\Thread_k^*}{\emp}]}
		\end{align*}
		Hence, $\pair{s}{s'}\in\effects{\SummaryStar}$.
		Now, \cref{new:thm:aux:stateless-effects-give-repeated-executions} yields ${\cfinit{\SummaryStar}}\stepprogany\conf{s'}{\cfinit{\SummaryStar}}$.
		So, ${\init{\SummaryStar}}\stepprogany\conf{s'}{\cfinit{\SummaryStar}}$ holds by induction.
		Similar to the above, we invoke the induction again and get some $\Thread\in\Program$ and construct the following execution:
		\begin{align*}
			\init{\Thread\inpar\SummaryStar}
			\stepprogany
			&\conf{s}{\cfinit{\Thread\inpar\SummaryStar}}
			=
			\conf{s}{\cfinit{\Thread\inpar\SummaryStar}[k\to\conf{\Thread_k^*}{\emp}]}\\
			\stepprog\:\,
			&\conf{s}{\cfinit{\Thread\inpar\SummaryStar}[k\to\conf{\Thread_k;\Thread_k^*}{\emp}]}\\
			\stepprog\:\,
			&\conf{s'}{\cfinit{\Thread\inpar\SummaryStar}[k\to\conf{\cmdskip;\Thread_k^*}{\emp}]}\\
			\stepprog\:\,
			&\conf{s'}{\cfinit{\Thread\inpar\SummaryStar}[k\to\conf{\Thread_k^*}{\emp}]}\\
			=~~~
			&\conf{s'}{\cfinit{\Thread\inpar\SummaryStar}}
		\end{align*}
		Note that the above execution is valid because $\state{s'}{\cf'}$ is separated by definition of $\Env{\cdot}$ together with $\state{s'}{\cf'}\in\Env{X_i}$, and by the fact that only the owned heap of $\cf'$ may be non-empty.
		This concludes the induction.
\end{description}

\noindent
Now, consider some $s\in\ReachFP$.
By definition there is some $\cf$ such that $\state{s}{\cf}\in X_k$.
Then, the above induction provides $s\in\Reach{\Thread\inpar\SummaryStar}$ for some $\Thread\in\Program$.
\qed

\subsubsection{Part~(ii).}

Let $\Thread\in\Program$ be some thread.
We show that every reachable shared heap of $\Thread\inpar\SummaryStar$ is explored by the fixed point.
We proceed by induction over the structure of executions.

\begin{description}[labelwidth=6mm,leftmargin=8mm,itemindent=0mm]
	\item[IB:]
		The empty execution reaches only $\sinit$, and by definition $\state{\sinit}{\cfinit{\Thread}}\in X_k$.
		\\[-2mm]

	\item[IH:]
		For every execution $\init{\Thread\inpar\SummaryStar}\stepprogany\state{s}{\cf}$ we have
		$\state{s}{\cfinit{\Thread}}\in X_k$.
		\\[-2mm]

	\item[IS:]
		Consider now $\init{\Thread\inpar\SummaryStar}\stepprogany\state{s}{\cf}\stepprog\state{s'}{\cf'}$.
		This execution yields the effect $\pair{s}{s'}\in\effects{\Thread\inpar\SummaryStar}$.
		Since $\Summary$ is a summary of $\Program$, we can use the effect inclusion to get $\pair{s}{s'}\in\effects{\SummaryStar}$.
		Then, \cref{new:thm:aux:stateless-effects-are-stateless-steps} yields some $\Thread'\in\Summary$ with $\state{s}{\conf{\Thread'}{\emp}}\stepprog\state{s'}{\conf{\cmdskip}{\emp}}$.
		By definition together with the semantics we get: $\state{s}{\cfinit{\Summary}}\stepprog\state{s'}{\cf''}$ for some $\cf''$.
		And by induction we have $\state{s}{\cfinit{\Thread}}\in X_k$.
		Hence, $\state{s}{\cfinit{\Thread}}\in\Env{X_k}\subseteq X_k$ concludes the induction.
\end{description}

\noindent
Consider now some $s\in\Reach{\Thread\inpar\SummaryStar}$.
By definition there is an execution of the form $\init{\Thread\inpar\SummaryStar}\stepprogany\state{s}{\cf}$.
The above induction now provides $\state{s}{\cfinit{\Thread}}\in X_k$.
Hence, the desired $s\in\ReachFP$ holds by definition.
\qed


\subsection{Proof of \Cref{new:thm:summary-implies-soundness}}

Let $\Program$ be a programs and $\Summary$ its summary.
This provides the effect inclusion: $\effects{\Thread\inpar\SummaryStar}\subseteq\effects{\SummaryStar}$.
Then, by \cref{new:thm:effect-inclusion-lifting} we get $\effects{\Program\inpar\SummaryStar}\subseteq\effects{\SummaryStar}$.
Because $\effects{\Program}\subseteq\effects{\Program\inpar\SummaryStar}$ is always true, we have $\effects{\Program}\subseteq\effects{\SummaryStar}$.
By definition this yields $\Reach{\Program}\subseteq\Reach{\SummaryStar}$ and concludes the first inclusion.
For the remaining equality we proceed as follows.
First, note that $\Reach{\SummaryStar}\subseteq\bigcup_{\Thread\in\Program}\Reach{\Thread\inpar\SummaryStar}$ is always true.
Hence, due to \cref{new:thm:fixed-point-reachset}, we have $\Reach{\SummaryStar}\subseteq\ReachFP$.
Now, again by effect inclusion, we have $\Reach{\Thread\inpar\SummaryStar}\subseteq\Reach{\SummaryStar}$.
Consequently, we have $\bigcup_{\Thread\in\Program}\Reach{\Thread\inpar\SummaryStar}\subseteq\bigcup_{\Thread\in\Program}\Reach{\SummaryStar}\subseteq\Reach{\SummaryStar}$.
Hence, \cref{new:thm:fixed-point-reachset} yields $\ReachFP\subseteq\Reach{\SummaryStar}$.
Altogether, this gives the desired equality: $\Reach{\SummaryStar}=\ReachFP$.
\qed


\subsection{Proof of \Cref{new:thm:checks-iff-summarization}}

We show the individual implications.
\begin{compactenum}[(i)]
	\item[(i)]
		If $X_k$ satisfies the checks \eqref{new:fp:check:mimic-effects} and \eqref{new:fp:check:stateless}, then $\Summary$ is a summary of $\Program$.
	\item[(ii)]
		If $\Summary$ is a summary of $\Program$, then $X_k$ satisfies \eqref{new:fp:check:mimic-effects} and \eqref{new:fp:check:stateless}.
\end{compactenum}

\subsubsection{Part~(i).}

Let $X_k$ be the fixed point from \cref{sec:effect-summaries} and let it satisfy the checks \eqref{new:fp:check:mimic-effects} and \eqref{new:fp:check:stateless}.
First, we show an auxiliary statement, then, we show statelessness, and, last, we show the effect inclusion.

\paragraph{Auxiliary induction.}
Let $\Thread\in\Program$ be some program.
We show that the reachable states of the program $\Thread\inpar\SummaryStar$ are explored by the fixed point.
We proceed by induction over executions.
Note that we showed a similar property before.
However, we cannot reuse it because it relies on the property we want to show, namely $\Summary$ being a summary of $\Program$.

\begin{description}[labelwidth=6mm,leftmargin=8mm,itemindent=0mm]
	\item[IB:]
		The empty execution reaches only $\sinit$.
		By definition, $\state{\sinit}{\cfinit{\Thread}}\in X_k$.
		\\[-2mm]

	\item[IH:]
		For every execution $\init{\Thread\inpar\SummaryStar}\stepprogany\state{s}{\cf}$ we have
		$\state{s}{\cf(0)}\in X_k$
		and moreover
		$\cf(i)\in\set{\conf{\cmdskip}{\emp},\conf{\cmdskip;\Thread_i^*}{\emp},\conf{\Thread_i^*}{\emp},\conf{\Thread_i;\Thread_i^*}{\emp}}$ for every $i,\Thread_i$ with $\cfinit{\SummaryStar}(i)=\conf{\Thread_i^*}{\emp}$.
		\\[-2mm]

	\item[IS:]
		Consider now $\init{\Thread\inpar\SummaryStar}\stepprogany\state{s}{\cf}\stepprog[i]\state{s'}{\cf'}$ for some $i$.
		Due to the semantics we have $\cf(j)=\cf'(j)$ for all $j\neq i$.
		Hence, the second proof obligation boils down to showing that $\cf'(i)$ has the desired form.
		First, consider $i=0$.
		This immediately satisfies the second proof obligation.
		Due to the semantics we have $\state{s}{\cf(0)}\stepprog\state{s'}{\cf'(0)}$.
		Hence, $\state{s}{\cf(0)}\in\Post{X_k}\subseteq X_k$ holds by induction.
		Consider now $i\neq 0$.
		We do a case distinction on $\cf(i)$ according to the induction hypothesis.
		\\[-2mm]

		\textit{Case $\cf(i)=\conf{\cmdskip}{\emp}$.}
		This case cannot apply as it does not allow for the step to $\state{s'}{\cf'}$.
		\\[-2mm]

		\textit{Case $\cf(i)=\conf{\cmdskip;\Thread^*}{\emp}$.}
		We get $s=s'$ and $\cf'(i)=\conf{\Thread^*}{\emp}$ by the semantics.
		Hence, $\cf(i)$ has the desired form.
		Moreover, $\state{s'}{\cf'(0)}\in X_k$ holds by induction.
		\\[-2mm]

		\textit{Case $\cf(i)=\conf{\Thread^*}{\emp}$.}
		Similarly to the previous case, we immediately conclude because we have $s=s'$ and $\cf'(i)\in\set{\conf{\cmdskip}{\emp},\conf{\Thread;\Thread^*}{\emp}}$.
		\\[-2mm]

		\textit{Case $\cf(i)=\conf{\Thread;\Thread^*}{\emp}$ with $\Thread\neq\cmdskip$.}
		Due to the semantics we get $\state{s}{\conf{\Thread}{\emp}}\stepprog\state{s'}{\cf''}$ for some $\cf''$.
		Hence, $\state{s}{\cfinit{\Summary}(i)}\stepprog\state{s'}{\cf''}$ by definition.
		Moreover, the induction provides $\state{s}{\cf(0)}\in X_k$.
		So we can invoke \eqref{new:fp:check:stateless} and get $\cf''=\conf{\cmdskip}{\emp}$.
		As a consequence, we get $\cf'(i)=\conf{\cmdskip;\Thread^*}{\emp}$ of the desired form.
		Moreover, by definition of the fixed point, we have $\state{s'}{\cf(0)}\in\Env{X_k}$.
		Note here that $\state{s'}{\cf(0)}$ is separated because $\state{s'}{\cf'}$ is separated by the semantics and we have $\cf(0)=\cf'(0)$.
		This concludes the induction because of $\Env{X_k}\subseteq X_k$.
\end{description}

\paragraph{Statelessness.}
Consider some thread $\Thread\in\Summary$, some reachable heap $s\in\Reach{\SummaryStar}$, and some transition $\state{s}{\cfinit{\Thread}}\stepprog\state{s'}{\cf''}$.
By definition we have $s\in\Reach{\Thread'\inpar\SummaryStar}$ for some $\Thread'\in\Program$.
So the auxiliary induction yields some $\state{s}{\cf}\in X_k$.
Moreover, we have $\cfinit{\Summary}(i)=\conf{\Thread}{\emp}$ for some $i$ by definition.
Hence, \eqref{new:fp:check:stateless} yields $\cf''=\conf{\cmdskip}{\emp}$ since $\cfinit{\Thread}=\conf{\Thread}{\emp}$.
That is, $\Summary$ is stateless by definition.

\paragraph{Effect inclusion.}
Let $\Thread\in\Program$ be some thread.
Towards the effect inclusion we show that the reachable shared heaps of $\Thread\inpar\SummaryStar$ are explored by $\SummaryStar$ alone.
We proceed by induction over executions.

\begin{description}[labelwidth=6mm,leftmargin=8mm,itemindent=0mm]
	\item[IB:]
		The empty execution reaches only $\sinit$.
		By definition, $\init{\SummaryStar}=\state{\sinit}{\cfinit{\SummaryStar}}$.
		\\[-2mm]

	\item[IH:]
		For every execution $\init{\Thread\inpar\SummaryStar}\stepprogany\state{s}{\cf}$ we have $\init{\SummaryStar}\stepprogany\state{s}{\cfinit{\SummaryStar}}$.
		\\[-2mm]

	\item[IS:]
		Consider now $\init{\Thread\inpar\SummaryStar}\stepprogany\state{s}{\cf}\stepprog[i]\state{s'}{\cf'}$ for some $i$.
		Consider $s\neq s'$ as we immediately conclude by induction otherwise.
		There are two cases.
		\\[-2mm]

		\textit{Case $i\neq 0$.}
		Due to the auxiliary induction together with the semantics, $\cf(i)$ must be of the form $\conf{\Thread_i;\Thread_i^*}{\emp}$ for some $\Thread_i\in\Summary$.
		So we have a valid step as follows: $\state{s}{\conf{\Thread_i}{\emp}}\stepprog\state{s'}{\cf''}$ for some $\cf''$.
		As shown above, $\Summary$ is stateless; and so is $\Thread_i\in\Summary$.
		Hence, $\cf''=\conf{\cmdskip}{\emp}$ must hold because $s\in\Reach{\SummaryStar}$ by induction.
		Now, \cref{new:thm:aux:stateless-steps-are-iterable} yields $\state{s}{\cfinit{\SummaryStar}}\stepprogany\state{s'}{\cfinit{\SummaryStar}}$.
		This concludes the case by induction.
		\\[-2mm]

		\textit{Case $i= 0$.}
		By the semantics we have $\state{s}{\cf(0)}\stepprog\state{s'}{\cf'(0)}$.
		The auxiliary induction gives $\state{s}{\cf(0)}\in X_k$.
		Hence, \eqref{new:fp:check:mimic-effects} gives the step $\state{s}{\cfinit{\Summary}}\stepprog[k]\state{s'}{\cf''}$ for some $k,\cf''$.
		By statelessness of $\Summary$ and $s\in\Reach{\SummaryStar}$ by induction we must have $\cf''(k)=\conf{\cmdskip}{\emp}$.
		That is, there is some $\Thread_k\in\Summary$ with $\state{s}{\conf{\Thread_k}{\emp}}\stepprog\state{s'}{\conf{\cmdskip}{\emp}}$.
		Now, we invoke \cref{new:thm:aux:stateless-steps-are-iterable} again which yields $\state{s}{\cfinit{\SummaryStar}}\stepprogany\state{s'}{\cfinit{\SummaryStar}}$.
		This concludes the case by induction.
\end{description}

\noindent
Consider now some effect $\pair{s}{s'}\in\effects{\Thread\inpar\SummaryStar}$.
This effect stems from an execution of the form $\init{\Thread\inpar\SummaryStar}\stepprogany\state{s}{\cf}\stepprog\state{s'}{\cf'}$.
By the above induction we have $\init{\SummaryStar}\stepprogany\state{s}{\cfinit{\SummaryStar}}$.
Moreover, with an analogous reasoning as for the induction step we get some thread $\Thread_k\in\Summary$ with $\state{s}{\conf{\Thread'}{\emp}}\stepprog\state{s'}{\conf{\cmdskip}{\emp}}$ and $\cfinit{\SummaryStar}(k)=\conf{\Thread_k^*}{\emp}$.
Hence, the following is a valid execution:
\begin{align*}
	\init{\SummaryStar}
	\stepprogany
	&\state{s}{\cfinit{\SummaryStar}}
	=~
	\state{s}{\cfinit{\SummaryStar}[k\to\conf{\Thread_k*}{\emp}]}\\
	\stepprog~
	&\state{s}{\cfinit{\SummaryStar}[k\to\conf{\Thread_k;\Thread_k*}{\emp}]}\\
	\stepprog~
	&\state{s'}{\cfinit{\SummaryStar}[k\to\conf{\cmdskip;\Thread_k*}{\emp}]}
	\ .
\end{align*}
That is, $\pair{s}{s'}\in\effects{\SummaryStar}$.
This establishes the effect inclusion and together with statelessness from above shows that $\Summary$ is a summary of $\Program$.
\qed

\subsubsection{Part~(ii).}

Let $\Summary$ be a summary of $\Program$.
First we show that \eqref{new:fp:check:stateless} holds, afterwards we tackle \eqref{new:fp:check:mimic-effects}.

\paragraph{\eqref{new:fp:check:stateless}}
Consider some $s\in\ReachFP$ and some transition $\state{s}{\cfinit{\Summary}}\stepprog\state{s'}{\cf'}$.
By definition, we have $\cfinit{\SummaryStar}=\conf{\Thread}{\emp}$ and thus $\state{s}{\conf{\Thread}{\emp}}\stepprog\state{s'}{\cf'}$.
Moreover, by the induction of the Proof of \cref{new:thm:fixed-point-reachset}, Part~(i), we have $s\in\Reach{\SummaryStar}$.
Hence, we immediately conclude $\cf'=\conf{\cmdskip}{\emp}$ because $\Summary$ is stateless.

\paragraph{\eqref{new:fp:check:mimic-effects}}
Consider some $\state{s}{\cf}\in X_k$ and some transition $\state{s}{\cf}\stepprog\state{s'}{\cf'}$.
By the induction of the Proof of \cref{new:thm:fixed-point-reachset}, Part~(i), there is some thread $\Thread\in\Program$ with $\cfinit{\Thread\inpar\SummaryStar}\stepprogany\state{s}{\cfinit{\Thread\inpar\SummaryStar}[0\to\cf]}$.
So by the semantics, we have \[\cfinit{\Thread\inpar\SummaryStar}\stepprogany\state{s}{\cfinit{\Thread\inpar\SummaryStar}[0\to\cf]}\stepprog\state{s'}{\cfinit{\Thread\inpar\SummaryStar}[0\to\cf']}\]
because $\state{s'}{\cf'}$ is separated by \cref{assumption:sequential-separation} and thus also the resulting state of the execution, $\state{s'}{\cfinit{\Thread\inpar\SummaryStar}[0\to\cf']}$, is separated.
That is, $\pair{s}{s'}\in\effects{\Thread\inpar\SummaryStar}$ holds by definition.
Hence, we get $\pair{s}{s'}\in\effects{\SummaryStar}$ because of the effect inclusion (which is provided by the fact that $\Summary$ is a summary of $\Program$).
Now, \cref{new:thm:aux:stateless-effects-are-stateless-steps} yields some $\Thread'\in\Summary$ with $\state{s}{\conf{\Thread'}{\emp}}\stepprog\state{s'}{\conf{\cmdskip}{\emp}}$.
Hence, we have $\state{s}{\cfinit{\Summary}}\stepprog\state{s'}{\cf''}$ for some $\cf$.
This concludes the \eqref{new:fp:check:mimic-effects}.
\qed

\end{document}